\newtheorem{thm}{Theorem}[section]
\newtheorem{lem}[thm]{Lemma}
\numberwithin{equation}{section}
\newcommand{\eq}[1]{(\hyperref[eq:#1]{\ref*{eq:#1}})}
\renewcommand{\sec}[1]{\hyperref[sec:#1]{Section~\ref*{sec:#1}}}
\newcommand{\thrm}[1]{\hyperref[thm:#1]{Theorem~\ref*{thm:#1}}}
\newcommand{\lemm}[1]{\hyperref[lemm:#1]{Lemma~\ref*{lemm:#1}}}
\newcommand{\prop}[1]{\hyperref[prop:#1]{Proposition~\ref*{prop:#1}}}
\newcommand{\corr}[1]{\hyperref[corr:#1]{Corollary~\ref*{corr:#1}}}
\newcommand{\fig}[1]{\hyperref[fig:#1]{Figure~\ref*{fig:#1}}}
\newcommand{\ket}[1]{|#1\rangle}
\newcommand{\bra}[1]{\langle#1|}
\newcommand{\tth}[0]{\textsuperscript{th}}
\DeclareMathAlphabet{\matheu}{U}{eus}{m}{n}
\DeclareMathOperator{\tr}{tr}
\newcommand{\sop}[1]{{\mathcal #1}}
\newcommand{\I}{{\mathbb I}}
\newcommand{\X}{{\mathbb P}_X}
\newcommand{\Y}{{\mathbb P}_Y}
\newcommand{\Z}{{\mathbb P}_Z}
\newcommand{\ketbra}[2]{|{#1}\rangle\!\langle{#2}|}
\newcommand{\no}{\nonumber\\}
\newcolumntype{L}[1]{>{\raggedright}p{#1}}
\newcolumntype{C}[1]{>{\centering}p{#1}}
\newcolumntype{R}[1]{>{\raggedleft}p{#1}}
\newcolumntype{D}{>{\centering\arraybackslash}X}
\begin{document}

\title{Robust Calibration of a Universal Single-Qubit Gate-Set via Robust Phase Estimation}
\author{Shelby Kimmel}
\affiliation{Joint Center for Quantum Information and Computer Science (QuICS), University of Maryland,
College Park, MD 20742}
\affiliation{Center for Theoretical Physics, Massachusetts Institute of Technology, Cambridge, MA 02139}
\author{Guang Hao Low}
\author{Theodore J. Yoder}
\affiliation{Department of Physics, Massachusetts Institute of Technology, Cambridge, MA 02139}
%\affiliation{Center for Theoretical Physics at MIT}

\begin{abstract}
An important step in building a quantum computer is calibrating
experimentally implemented quantum gates to produce operations that
are close to ideal unitaries. The calibration step involves estimating
the systematic errors in gates and then using controls to correct the
implementation. Quantum process tomography is a standard technique for
estimating these errors, but is both time consuming, (when one only
wants to learn a few key parameters), and is usually inaccurate without resources like
perfect state preparation and measurement, which might not be available.
With the goal of efficiently and accurately estimating specific errors using minimal
resources, we develop a parameter estimation technique, which can
gauge key systematic parameters (specifically, amplitude and off-resonance errors) in 
a universal single-qubit gate-set with provable robustness and efficiency. In particular, 
our estimates achieve the optimal efficiency, Heisenberg scaling, and do so without 
entanglement and entirely within a single-qubit Hilbert space. Our main theorem making 
this possible is a robust version of the phase estimation procedure of Higgins et al. \cite{HBB+7}.
\end{abstract}

\maketitle

\noindent{\underline{\Large{\textbf{Errata:}}} In Eq. (\ref{eq:p_error}), we state that the probability of an error occuring at the $j$th iteration is given by the probability that the estimate is outside of $\pi/(2k_j)$ of the actual value. However, as pointed out in Refs.
\cite{PhysRevA.102.042613,PhysRevA.103.042609}, an error can in fact occur
even when the estimate at the $j$th round is within of $\pi/(2k_j)$ of the true value. The correct condition that characterizes when an error occurs is that the estimate is outside of $\pi/(3k_j)$ of the actual value \cite{PhysRevA.102.042613,PhysRevA.103.042609}. Thus, while the detailed analysis of App. \ref{app:bounding} does bound the probability that the $j$th iteration estimate is within $\pi/(2k_j)$ of the true value, this only gives a lower bound on the probability of an error occuring. 

In fact, Ref. \cite{HBB+7} already bounds the probability that the estimate at
the $j$th round is within $\pi/(3k_j)$ of the actual value and Ref.
\cite{PhysRevA.102.042613} provides a tighter analysis - so while our analysis in Sec. \ref{sec:WE} is not correct, Refs. \cite{HBB+7,PhysRevA.102.042613} show that the general procedure still attains Heisenberg scaling, with a slightly worse constant overhead than the performance claimed in this work. A simple argument in
App. B of \cite{PhysRevA.103.042609} then shows how to leverage those analyses to obtain (through an additional constant overhead) a procedure with Heisenberg scaling that is robust to additive errors, superceding 
Sec. \ref{sec:AE} and showing the main result of this work - that phase estimation can provide robust calibration of a universal single gate set - is still valid. 

\section{Introduction}

Not all errors in a quantum computation experiment are created equal. There are actually two broad
classes of error, unitary errors, also known as systematic errors, and nonunitary errors, also 
known as decoherence. Both sets of errors need to be corrected below a certain threshold
for scalable quantum computation to take place \cite{Aharonov1997,Preskill1998}. Correcting systematic errors, such as
over-rotation or off-resonance errors, is typically regarded as the easier task; because these 
errors are directly related to the controls available to an experimenter, they can be directly corrected
by changing those controls. In this respect systematic errors contrast with decoherence, which 
is typically less affected by an experimenter's control software and more influenced by imperfect 
or nonideal hardware.

However, even though systematic errors are considered the easier of the two to correct, calibrating 
gates in a quantum computer to reduce systematic errors can still take hours even for modest system 
sizes, and moreover this calibration may have to be repeated every time the quantum computer is 
switched on \cite{Lucero14}. Not only can this process be inefficient in terms of the precision of the estimates
with respect to time, but standard techniques for estimating systematic errors
often suffer from measurement bias, leading to inaccurate estimates \cite{CGT+09}.

To characterize systematic errors, quantum process tomography \cite{CN97}
has long been a valuable tool in the experimental toolkit. However,
standard techniques \cite{CN97} require perfect state preparation,
perfect measurement, and at least some perfect gates. Especially
during the calibration stage of an experiment, it is unreasonable to assume access to
such perfect resources, and, without them, standard process tomography results in a difficult nonlinear
estimation problem \cite{stark1,stark3,stark2014}, and hence the estimates obtained using
this technique are typically inaccurate. Moreover, systematic errors are controlled
by a few key parameters, but unless
the measurement basis of the tomography procedure is specialized,
e.g. \cite{Bendersky2009}, to extract those few important parameters
can require resources that scale exponentially with the size of the system
and can be time consuming even for single qubit processes.

Recent approaches aim to circumvent the stringent requirements of
standard tomography. Randomized benchmarking (RB)
\cite{KLR+08,MGE11,MGJ+12}, randomized benchmarking tomography
(RBT) \cite{kimmel14}, and other tools based
on randomized benchmarking \cite{WGHF15,WBE14} can characterize quantum error processes even when
nothing is known about state preparation and measurement. However,
these procedures require access to relatively good Clifford operations
\cite{MGE12,ECM+13}. In addition, other than certain key parameters like the average
fidelity, single parameters cannot be
extracted efficiently. While the average fidelity can be learned
efficiently using RB, average fidelity gives no information about the nature of the
systematic errors on the gates, and so is useless for experimentalists
who would like to use tomographic data to correct systematic errors.

Another promising approach is gate-set tomography (GST) \cite{BKG+13,MGS+12}.
GST makes no assumptions about state preparation, measurement, or
processes, while still obtaining accurate estimates. However, GST is even more inefficient than
standard tomography, since to learn even a single parameter, one must
fully characterize a complete gate-set along with state preparation
and measurement.

We propose a new procedure to estimate simultaneously all the systematic errors in a universal single-qubit gate-set.
This procedure falls in between existing protocols in terms of required
resources and assumptions, but is optimal in terms of asymptotic efficiency. 
Rather than doing full tomography, we extract only parameters that correspond to systematic errors,
precisely the errors that the experimentalists can easily correct. We learn those
parameters efficiently and non-adaptively --- in fact we are Heisenberg
limited. Like GST, we require no perfect resources, and, moreover, we do not require any additional gates besides the ones
we are characterizing. We also never require more than a single-qubit Hilbert space. In particular, we never need entangled 
states, like those often employed in interferometric phase estimation procedures \cite{Kacprowicz2010, Vidrighin2014}.
Instead, the source of the quantum advantage in our procedure is the exploitation of long coherence times of the qubit
system, and our ability to apply a gate multiple times in series. This allows small variations in gates
 to coherently accumulate into large observables.

Of course, like other Heisenberg limited studies \cite{Huelga1997, Kolodynski2010}, 
a finite coherence time ultimately limits the estimation accuracy that we can achieve. 
However, our procedure does retain Heisenberg scaling against 
state preparation errors and measurement errors.  Thus, while a standard parameter 
estimation scheme (one that repeatedly prepares a state, applies an operation, and then 
measures) is limited by uncertainty in the measurement operator, our procedure can obtain
Heisenberg-limited, arbitrarily precise parameter estimates even with unknown (but not too
large) errors in the measurement operator. In this way, our procedure also has
some of the flavor of randomized benchmarking.

In order to achieve these gains in efficiency and accuracy, we lose some
of the flexibility of other procedures. Our procedure will fail if errors
are larger than some threshold amount. Also, the procedure is most useful
when the experimentalist has precise control over the gates, and can undo
the systematic errors once they are characterized. We hope that a calibration procedure
like the one we describe could be used to quickly ``tune up'' gates
before more sophisticated procedures like RBT or GST are employed
to characterize non-systematic (decoherence) errors.

Our main theorem says that it is possible to perform phase estimation
in the presence of errors. In particular, we consider additive errors
in the measurement probabilities of experiments. This is a fairly
straightforward idea, but it turns out that many different effects can
be swept into these additive errors. For example, state preparation and
measurement errors can be seen as additive errors. We show how to
do phase estimation in the presence of these additive errors and extract two parameters 
of a process, amplitude and off-resonance errors, instead of only learning the phase of 
a rotation, as is typical. It turns out that while estimating one of the parameters of interest, 
the effect of the other parameter can be thought of as another additive error. Moreover, 
when multiple additive errors occur simultaneously, the result is still an additive error, 
with (worst-case) magnitude equal to the sum of the magnitudes of the individual additive errors.

In particular, we modify and improve a non-adaptive phase estimation technique of
Higgins et al. \cite{HBB+7} to show
\begin{thm}\label{main_theorem}
Suppose that we can perform two families of experiments, $\ket{0}$-experiments
and $\ket{+}$-experiments, indexed by $k\in\mathbb{Z}^+$, whose
probabilities of success are, respectively,
\begin{align}
p_0(A,k)&=\frac{1+\cos(kA)}{2}+\delta_0(k),\label{eq:p0_orig}\\
p_+(A,k)&=\frac{1+\sin(kA)}{2}+\delta_+(k).\label{eq:pp_orig}
\end{align}
Also assume that performing either of the $k^{\text{th}}$ experiments
takes time proportional to $k$, and that
\begin{align}
\sup_k\left\{|\delta_0(k)|,|\delta_+(k)|\right\}< 1/\sqrt{8}.
\end{align}
Then an
estimate $\hat{A}$ of $A\in(-\pi,\pi]$ with standard deviation $\sigma(\hat A)$ can be
obtained in time $T=\mathcal{O}(1/\sigma(\hat A))$ using non-adaptive
experiments.

On the other hand, if $|\delta_0(k)|$ and
$|\delta_+(k)|$ are less than $1/\sqrt{8}$ for all $k<k^*$, then
it is possible to obtain an estimate $\hat A$ of $A$ with $\sigma(\hat A)\sim O(1/k^*)$
(with no promise on the scaling of the procedure).
\end{thm}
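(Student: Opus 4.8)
The plan is to adapt the Higgins et al. non-adaptive ladder of experiments to this noisy setting. I would measure on a geometric schedule $k_j = 2^j$ for $j = 0,1,\dots,J$, performing $N_j$ repetitions of each of the $\ket{0}$- and $\ket{+}$-experiments at level $j$. From the empirical success frequencies I form $\hat x_j = 2\hat p_0-1$ and $\hat y_j = 2\hat p_+-1$, which estimate $\cos(k_j A)$ and $\sin(k_j A)$ up to the additive biases $2\delta_0(k_j),2\delta_+(k_j)$ and sampling fluctuations. I would then build the estimate of $A$ hierarchically from coarse to fine: level $0$ localizes $A$ to a half-circle, and each subsequent level halves the localizing arc, so that after level $J$ the arc has width $\Theta(2\pi/2^J)=\Theta(1/k_{\max})$.

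The crux — and the place the hypothesis $\sup_k\{|\delta_0|,|\delta_+|\}<1/\sqrt8$ enters — is a ``strong-signal'' robustness argument. Since $\cos^2(k_jA)+\sin^2(k_jA)=1$, at every level at least one of the two amplitudes has magnitude $\ge 1/\sqrt2$, and for that coordinate the corresponding success probability differs from $1/2$ by at least $\tfrac12\cdot\tfrac1{\sqrt2}=1/\sqrt8$. Because this exceeds the additive bias, the sign of (true probability $-\,1/2$) for the dominant coordinate is \emph{unchanged} by $\delta$, so the dominant bit can never be corrupted by the systematic error alone; crucially this holds uniformly at every $k$, which is what lets the resolution improve without bound rather than saturating. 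I would identify the dominant coordinate at each level by comparing $|\hat p_0-1/2|$ with $|\hat p_+-1/2|$ and read off its sign; combined with the arc inherited from the previous level, this reliably selects the correct half and refines the estimate by one bit.

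Next I would control the sampling error. The margin $m=1/\sqrt8-\sup_k\{|\delta_0|,|\delta_+|\}>0$ is a positive constant, so by a Hoeffding bound the probability that sampling fluctuations flip the dominant sign at level $j$ is at most $e^{-2N_j m^2}$. A phase slip at level $j$ produces an error in $\hat A$ of order $2\pi/2^{j}$, so the slips that matter most are those at the \emph{coarsest} (smallest-$j$) levels — fortunately also the cheapest in time. Following Higgins et al. I would take a schedule with more repetitions at small $k$ (roughly $N_j$ decreasing linearly in $j$, with $N_0=\Theta(J)$ and $N_J=\Theta(1)$) so that the catastrophic-slip variance $\sum_j e^{-2N_j m^2}(2\pi/2^{j})^2$ is dominated by the no-slip resolution $\Theta(2\pi/2^J)^2$. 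Since the total time is $T=\sum_j N_j 2^j=\Theta(2^J)$ while the final arc has width $\Theta(1/k_{\max})$, this gives $\sigma(\hat A)=\Theta(1/k_{\max})$ and hence $T=\mathcal O(1/\sigma(\hat A))$, Heisenberg scaling, using only non-adaptive single-qubit experiments.

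The hard part is precisely this variance bookkeeping: proving that a slowly varying per-level sample budget simultaneously drives every per-level slip probability low enough that the rare, large errors from coarse levels do not dominate, while keeping the summed time Heisenberg-limited; balancing the exponentially small slip probabilities against the exponentially large error costs at coarse levels is delicate, and it is compounded by the cascading risk that mis-identifying the dominant coordinate at one level corrupts all finer levels. Finally, the second statement should follow by truncation: if the bias bound holds only for $k<k^*$, I run the identical ladder up to the largest level with $2^j<k^*$, so every level invoked satisfies the robustness hypothesis and the reconstruction localizes $A$ to an arc of width $\Theta(1/k^*)$, giving $\sigma(\hat A)\sim O(1/k^*)$; because no level with $k\ge k^*$ is trustworthy, the refinement simply stops there and no further scaling can be promised.
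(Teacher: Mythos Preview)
Your global architecture matches the paper's: a geometric ladder $k_j=2^{j}$, a linearly decreasing per-level sample budget $N_j$, and a variance decomposition that trades exponentially small slip probabilities against exponentially large error costs. The truncation argument for the second clause is also the paper's.

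The genuine gap is in your per-level estimation rule. You propose to compare $|\hat p_0-1/2|$ with $|\hat p_+-1/2|$, declare the larger one ``dominant,'' and read its sign. Your robustness claim is that the \emph{true} dominant coordinate has amplitude $\ge 1/\sqrt2$, so its sign survives a bias $<1/\sqrt8$. That is correct, but irrelevant: you do not know which coordinate is truly dominant, and the bias can simultaneously inflate the small-amplitude coordinate so that it \emph{looks} dominant and flip its sign. Concretely, take $k_jA$ with $\cos(k_jA)=0.995$, $\sin(k_jA)=0.1$, and $\delta_0(k_j)=\delta_+(k_j)=-0.35$ (both below $1/\sqrt8$). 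Then $p_0-\tfrac12\approx 0.148$ while $p_+-\tfrac12\approx -0.30$; your rule selects the sine coordinate as dominant and reads a \emph{negative} sign, which is wrong. No amount of sampling fixes this, because the biased means already point the wrong way, so your Hoeffding step never engages. This is exactly the ``cascading risk'' you flag but do not resolve.

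The paper avoids this by never trying to isolate a single bit. It keeps both coordinates and forms the two-dimensional estimate $\widehat{k_jA}=\operatorname{atan2}(\hat a_+-M/2,\ \hat a_0-M/2)$, then bounds directly the probability that this angle misses $k_jA$ by more than $\pi/2$. The analysis (their Lemmas on $p_{\text{error}}$) is a joint, two-dimensional tail bound: one projects $(\hat x,\hat y)$ onto the true direction $(\cos\varphi,\sin\varphi)$ and shows the worst case occurs at $\varphi=\pi/4$ with $\delta_0=\delta_+=-\delta$, where $p_0=p_+=(2+\sqrt2)/4-\delta$ and the exponential decay rate of $p_{\text{error}}$ vanishes exactly at $\delta=1/\sqrt8$. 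This is where the threshold really lives --- not in a one-coordinate sign argument, but in the geometry of how far a bias of size $2\delta$ in each coordinate can rotate the vector $(\cos\varphi,\sin\varphi)$. In your counterexample above, the atan2 estimate lands at roughly $-1.17$ radians, which is still within $\pi/2$ of the true value $\approx 0.1$, so the paper's rule succeeds where yours fails. If you replace your dominant-sign step with the atan2 estimator and adapt the two-dimensional tail bound, the rest of your outline goes through essentially as written.
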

More precise bounds on the scaling of standard deviation with
time can be found in Section \ref{sec:phase_estimation}.

We call the terms $\delta_{0}(k)$ and $\delta_{+}(k)$  {\it{additive errors}}.
While we can only achieve Heisenberg scaling up to arbitrary precision
when the additive errors have magnitude less than
$1/\sqrt{8}$ for all $k$, some effects  (like depolarizing errors)
cause additive errors that grow with $k$ and so eventually
overwhelm the $1/\sqrt{8}$ bound. However, in that case, if $k^*$
is the $k$ at which the errors become too large, our procedure can
give an estimate with precision that is $\mathcal O(1/k^*)$, which is often
 better than standard procedures which are limited by uncertainty
 in state preparation and measurement.

The layout of the paper is as follows. First, in Section \ref{sec:setup}, we
define notation for single qubit operations and errors. In Section
\ref{sec:sequences} we use Theorem \ref{main_theorem} to calibrate systematic errors in a single-qubit gate-set,
and then Section \ref{sec:errors} discusses the robustness of this procedure to
sources of error such as imperfect state preparation,
measurement noise, and decoherence. Finally, in Section \ref{sec:phase_estimation} 
we modify and reanalyze the non-adaptive Heisenberg limited phase estimation 
procedure of \cite{HBB+7} to achieve better scaling and simpler bounds, resulting in
the proof of Theorem \ref{main_theorem}.

\section{Characterizing a Universal Gate-Set}\label{sec:setup}

We consider systematic errors in a universal single-qubit 
gate-set. For the moment, we assume that the implemented gates 
have systematic errors but no decoherence errors, and hence are perfect unitaries.
(We relax these assumptions in Section \ref{sec:errors}.)
Single-qubit unitaries are defined by two parameters: their axis of rotation
and their angle of rotation in the Bloch sphere. (See \cite{Nielsen2004} 
for background on the Bloch sphere.) 

Two unitary gates are sufficient to create a universal single-qubit
gate-set. We describe a scheme to characterize a gate-set where
the two gates are ideally orthogonal. In particular, we consider the
case that one gate is a faulty implementation of $Z_{\pi/2}$, a $\pi/2$
rotation about the $Z$-axis of the Bloch sphere, and the other gate
is a faulty implementation of $X_{\pi/4}$, a $\pi/4$ rotation about the
$X$-axis. We also assume that the experimenter can create an
imperfect $\ket{0}$ state, the 1-valued eigenstate of $Z_{\pi/2}$. How good
the gates, state preparation, and measurement must be initially for our
procedure to work is determined by Theorem \ref{main_theorem}, and 
will be made clear in the calibration procedures in Section \ref{sec:sequences}.

We chose specific rotation angles for our $Z$ and $X$
rotations. This choice is mainly for convenience, since it turns out 
that access to (1) imperfect versions of the states
\begin{align}
\ket{0},&&
\ket{+}=\frac{\ket{0}+\ket{1}}{\sqrt{2}},&&
\ket{\!\!\rightarrow}=\frac{\ket{0}+i\ket{1}}{\sqrt{2}},
\end{align}
and to (2) a $Z_{\pi}$ rotation calibrated to near perfection,
are sufficient to characterize $Z_{\chi}$ and $X_{\phi}$ for any rotations
$\chi$ and $\phi$ using our techniques. Only calibration of $X_{\phi}$ requires the second
condition. These two conditions are satisfied given the
gate-set in the previous paragraph. Indeed, in an experiment where
$Z_{\chi}$ and $X_{\phi}$ are available, albeit erroneously, for any
$\chi$ and $\phi$, it would perhaps be best to first calibrate $Z_{\pi/2}$
and $X_{\pi/4}$ rotations so that conditions (1) and (2) are satisfied before
calibrating $Z_{\chi}$ and $X_{\phi}$ for arbitrary $\chi$ and $\phi$.

We now define our universal gates mathematically.
Without loss of generality, we can define the $Z$-axis of the Bloch
sphere to be aligned with the axis of rotation of our approximate $Z_{\pi/2}$
gate. This means that our initial state preparation may not be
aligned with the $Z$-axis, but our scheme is robust against this
type of error. Once the axis of our approximate $Z_{\pi/2}$
gate is fixed to the $Z$-axis, the only free parameter is the angle
of rotation. Thus, we can write our approximate $Z_{\pi/2}$ gate as
\begin{align}\label{eq:Zrot}
Z_{\pi/2}(\alpha)=\cos\left(\frac{\pi}{4}(1+\alpha)\right)\I-i\sin\left(\frac{\pi}{4}(1+\alpha)\right)\Z,
\end{align}
where $\{\X,\Y,\Z\}$ are the Pauli matrices, $\I$ is the $2\times2$
identity matrix, and $\alpha$ is a parameter that quantifies how far
the implemented angle of rotation is from $\pi/2$. When
$\alpha=0$, we have implemented a perfect gate. 

Likewise, without loss of generality, we
define the $X$-axis of the Bloch sphere so that the axis
of rotation of our approximate
$X_{\pi/4}$ gate lies along the $XZ$-plane of the Bloch sphere. In this
case, the approximate $X_{\pi/4}$ gate has two degrees of freedom: the location of the
axis of rotation in the $XZ$-plane of the Bloch sphere, and its angle 
of rotation. More precisely, we can write our approximate $X_{\pi/4}$ gate as
\begin{align}
\label{eq:R1}
X_{\pi/4}(\epsilon,\theta)=&\cos\left(\frac{\pi}{8}
\left(1+\epsilon\right)\right)\I-i\sin\left(\frac{\pi}{8}
\left(1+\epsilon\right)\right)
\no&\times\left(\cos(\theta)\X+\sin(\theta)\Z\right),
 \end{align}
 where $\theta$ is the angle of the axis of rotation relative
 to the $X$-axis, and $\epsilon$ is a parameter
 that quantifies how far
the implemented angle of rotation is from $\pi/4$.
When $\epsilon=\theta=0$ we have implemented a perfect gate.

Our goal is to estimate $\alpha$, $\theta$, and $\epsilon$, with the
expectation that once these systematic errors have been quantified,
experimentalists can adjust the controls of the gates to set their
values close to $0$. If desired, the process can then be repeated --
the new values of $\alpha$, $\theta$, and $\epsilon$ can be reestimated 
and readjusted again.

We will also need notation for a general imperfect $X$ rotation:
\begin{align}
\label{eq:R2}
X_{\phi}(\epsilon,\theta)=&\cos\left(\frac{\phi}{2}
\left(1+\epsilon\right)\right)\I-i\sin\left(\frac{\phi}{2}
\left(1+\epsilon\right)\right)
\no&\times\left(\cos(\theta)\X+\sin(\theta)\Z\right).
 \end{align}
This expression $ X_{\phi}(\epsilon,\theta)$ represents a rotation that is in 
the $XZ$ plane of the Bloch sphere, which is approximately a rotation
by an angle $\phi.$ In general, the parameters $\epsilon$ and $\theta$ will
depend on $\phi$.

In some cases, we will apply the unitary operations $X_{\pi/4}(\epsilon,\theta)$
and $Z_{\pi/2}(\alpha)$ to mixed states instead of pure states. In this case, 
we will use cursive letters to represent the CPTP maps corresponding to these unitaries. That is
\begin{align}
\sop X_{\pi/4}(\epsilon,\theta)(\rho)=
X_{\pi/4}(\epsilon,\theta)\rho\left(X_{\pi/4}(\epsilon,\theta)\right)^\dagger,\nonumber\\
\sop Z_{\pi/2}(\alpha)(\rho)=
Z_{\pi/2}(\alpha)\rho\left(Z_{\pi/2}(\alpha)\right)^\dagger,
\end{align}
where $\dagger$ denotes the conjugate transpose.

We use the notation $Z_{\pi/2}(\alpha)^k$ to mean $k$ repeated applications of $Z_{\pi/2}(\alpha).$
Unitaries act right to left, so
$Z_{\pi/2}(\alpha)X_{\pi/4}(\epsilon,\theta)$ means apply the $X$-rotation first, and then
the $Z$-rotation.
%%%%%%%%%%%%%%%%%%%%%%%%%%%%%%%%%%%%%%%%%%%%%%%%%%%%%%%%%%%%%%%%%%%%%%%%%%%%%%%%%%%%%%%%%%%%%%%%%%%%%%%%%%%%%%%%%%%%%%
%%%%%%%%%%%%%%%%%%%%%%%%%%%%%%%%%%%%%%%%%%%%%%%%%%%%%%%%%%%%%%%%%%%%%%%%%%%%%%%%%%%%%%%%%%%%%%%%%%%%%%%%%%%%%%%%%%%%%%

\section{Sequences for Estimating Systematic Errors}\label{sec:sequences}

In this section, we describe sequences consisting of unitaries $Z_{\pi/2}(\alpha)$
and $X_{\pi/4}(\epsilon,\theta)$, which can be used to estimate
the systematic error parameters $\alpha$, $\epsilon,$ and $\theta.$
In particular, we would like to obtain observables $p_0(\alpha,k)$, $p_+(\alpha,k)$, $p_0(\theta,k)$, $p_+(\theta,k)$,
$p_0(\epsilon,k)$, and $p_+(\epsilon,k)$, as described in Theorem
\ref{main_theorem}. By Theorem \ref{main_theorem}, such observables
will allow us to accurately estimate $\alpha$, $\epsilon$ and $\theta$ as long as
the additive errors associated with these observables are not too large.
We address the problem of initially bounding additive errors in Appendix \ref{app:bounding}.

In this section, we assume that we can prepare the states $\ket{0}$, $\ket{+}$, and $\ket{\!\!\rightarrow}$
perfectly, and that we can measure (perfectly) the probability of being in the state  $\ket{0}$,
or the probability of being in the state $\ket{+}$. In Section \ref{sec:errors}, we introduce state 
preparation and measurement errors to our protocols.

\subsection{Estimating $\alpha$}\label{sub:alpha}

With the assumption of perfect state preparation and measurement,
we can estimate $\alpha$ using standard phase estimation, without having to resort to robust
phase estimation. One can verify that
\begin{align}
|\bra{+}Z_{\pi/2}(\alpha)^k\ket{+}|^2&=\frac{1+\cos\left(-k\frac{\pi}{2}(1+\alpha)\right)}{2},\nonumber\\
|\bra{+}Z_{\pi/2}(\alpha)^k\ket{\rightarrow}|^2&=\frac{1+\sin\left(-k\frac{\pi}{2}(1+\alpha)\right)}{2}.
\end{align}
Comparing with Eqs. \ref{eq:p0_orig} and \ref{eq:pp_orig}, we see
these sequences can be used to estimate $\alpha$. If $N$ is the number
of times we apply $Z_{\pi/2}(\alpha)$, by Theorem
\ref{main_theorem}, we can obtain an estimate of $\alpha$ with with
standard deviation  $\sop O(1/N)$. This is what is meant by Heisenberg
scaling or Heisenberg limited. ($N$ is the most natural and unambiguous measure of resource consumption
for phase estimation; see the appendix of \cite{HBB+7}). 

\subsection{Estimating $\epsilon$}\label{sub:epsilon}

We next describe the sequences used to estimate $\epsilon$. In this
section, for ease of explication later in the paper, we will
characterize the general gate $X_{\phi}(\epsilon,\theta)$, where
we can always substitute $\pi/4$ for the variable $\phi$ to obtain
the results relevant to $X_{\pi/4}(\epsilon,\theta).$ Let 
$\phi_\epsilon=\phi(1+\epsilon)$.
Again, a simple calculation shows that 
\begin{align}\label{eq:exp0_ep}
|\bra{0}X_{\phi}(\epsilon,\theta)^k\ket{0}|^2=
&\frac{1+\cos\left(k\phi_\epsilon\right)}{2}
+\sin^2\left(\frac{k\phi_\epsilon}{2}\right)\sin^2\theta,\nonumber\\
|\bra{0}X_{\phi}(\epsilon,\theta)^k\ket{\rightarrow}|^2=
&\frac{1+\sin\left(k\phi_\epsilon\right)}{2}
-\sin\left(k\phi_\epsilon\right)\sin^2\frac{\theta}{2}.
\end{align}
Comparing with Eq. \eq{p0_orig}, we see this sequence allows us
make a measurements with success probabilities $p_{0/+}(\phi_\epsilon,k)$,
with $|\delta_0(k)|,|\delta_+(k)|\le\sin^2(\theta)$.

By Theorem \ref{main_theorem}, as long as $|\theta|$ is less than
about $36^{\circ}$, (along with our current assumptions of perfect
state preparation and measurement) then we can estimate
$\phi_\epsilon$, and hence $\epsilon$ (assuming a constant $\phi$),
with standard deviation $\sop O(1/N)$, where
$N$ is the total number of times $X_{\phi}(\epsilon,\theta)$ is
used over the course of the protocol.

In Appendix \ref{app:bounding},
we show how to independently bound the size of $\theta$, in order to
determine if $|\theta|$ is small enough to apply this protocol.

\subsection{Estimating $\theta$}\label{sub:theta}

We now discuss sequences to estimate $\theta.$ For the moment,
we assume that after estimating $\alpha$, we are able to
set $\alpha=0$ exactly. In Section \ref{sec:nonzero_alpha} we will examine what
happens to this protocol when $\alpha$ is not zero.

Consider the rotation
\begin{align}\label{eq:U_seq}
U&=Z_{\pi/2}(0)X_{\pi/4}(\epsilon,\theta)^4
Z_{\pi/2}(0)^2X_{\pi/4}(\epsilon,\theta)^4Z_{\pi/2}(0).
\end{align}
Then, because any single-qubit unitary can be written as a rotation of some
angle $\Phi$ about an axis $\vec{n}$ in the Bloch sphere, we may write
\begin{align}
U=\cos{\left(\frac{\Phi}{2}\right)}\mathbb{I}-
i\sin{\left(\frac{\Phi}{2}\right)}
\vec{n}\cdot(\X,\Y,\Z).
\end{align}

By direct expansion, we find that the $Y$-component of $\vec{n}$ is zero and
\begin{align}
n_X&=-\frac{\cos(\theta)\cos\left(\frac{\pi\epsilon}{2}\right)}
{\sqrt{1-\sin^2\theta\cos^2\left(\frac{\pi\epsilon}{2}\right)}},\\
n_Z&=\frac{\sin\left(\frac{\pi\epsilon}{2}\right)}
{\sqrt{1-\sin^2\theta\cos^2\left(\frac{\pi\epsilon}{2}\right)}},\\
\sin\left(\frac{\Phi}{2}\right)&=2\sin(\theta)\cos\left(\frac{\pi\epsilon}{2}\right)
\sqrt{1-\sin^2(\theta)\cos^2\left(\frac{\pi\epsilon}{2}\right)}.
\end{align}
We define the angle $\Theta$ to be such that $\cos(\Theta)=n_X$ and
$\sin(\Theta)=n_Z$. Using our notation of Section
\ref{sec:setup}, we may write $U=X_{\Phi}(0,\Theta)$, and hence,
using the techniques of Section \ref{sub:epsilon}, we can obtain a Heisenberg
limited estimate of $\Phi$ as long as $|\Theta|$ is not too large.  
All that remains is to show that an estimate of $\Phi$ allows us to 
estimate $\theta$ with similar precision, and that $\Theta$ is
not too large.  

We have
\begin{align}
|\Theta|&=\arcsin|n_Z|\no
&=\arcsin\left|\frac{\sin(\pi\epsilon/2)}{\sqrt{1-\sin^2\theta\cos^2(\frac{\pi\epsilon}{2})}}\right|,
\end{align}
which implies $\sin^2\Theta$ scales as $\mathcal{O}(\epsilon^2)$. In particular, if
$\sin^2\theta<1/\sqrt{8}$, as is necessary for estimating $\epsilon$ using the
methods of Section \ref{sub:epsilon}, then $|\epsilon|< 0.341$ is sufficient for 
estimating $\Phi$. We can independently verify whether 
$|\epsilon|$ is small enough for the protocol to succeed
using the techniques of Appendix \ref{app:bounding}.

We now show that estimating $\Phi$ is sufficient to estimate $\theta$.
We have
\begin{align}\label{eq:Btotheta}
\sin{\frac{\Phi}{2}}=&2\sin{\theta}\cos{\frac{\pi\epsilon}{2}}
\sqrt{1-\sin^2{\theta}\cos^2{\frac{\pi\epsilon}{2}}},
\end{align}
which can be expanded, assuming small $\theta$, as
\begin{align}
\sin{\frac{\Phi}{2}}=2\theta\cos\frac{\pi\epsilon}{2}+\mathcal{O}(\theta^3).
\end{align}
Since $\epsilon$ can be estimated from Section \ref{sub:epsilon}, we can estimate
\begin{align}\label{eq:thetaFromA}
\theta=\frac{\sin(\Phi/2)}{2\cos(\pi\epsilon/2)}.
\end{align}
As long as $\epsilon$ and $\theta$ are not too large, the relationship
between $\Phi$ and $\theta$ is very close to linear, so if we know
 the standard deviation $\sigma(\hat{\Phi})$ of $\hat \Phi$, our estimate of $\Phi$,
 we can obtain the standard deviation of our estimate of $\theta$, $\sigma(\hat{\theta})$ as
 \begin{align}
 \sigma(\hat \theta)\le\frac{\sigma(\hat \Phi)}{4\cos(\pi\epsilon/2)}.
 \end{align}
 Since we can estimate $\Phi$ with Heisenberg limited uncertainty,
 this means we can estimate $\theta$ with Heisenberg limited uncertainty.

In the case that the relationship between $\theta$ and $\Phi$ is not
close to linear (which can be checked using Eq. \eq{Btotheta})
then while our technique gives a bound on the variance of our
estimate of $\Phi$, because we don't know the form of the distribution
of this estimate, we can not easily bound the variance of our estimate
of $\theta.$ In this case, we recommend using non-parametric bootstrapping
\cite{ET94}, which, at the cost of a constant multiplicative overhead, can be used
to estimate the variance of the estimate of $\theta$ obtained from this
procedure, without any assumptions on a linear relationship between
$\Phi$ and $\theta.$ While it is possible that this non-linearity would
keep the estimate of $\theta$ from being Heisenberg limited,
as long as the variance of our estimate of
$\Phi$ is small, the relationship between $\Phi$ and $\theta$
should be approximately linear, and
so we expect that we will always be Heisenberg limited in our estimate.

In Section \ref{sec:setup}, we claimed that our techniques can
be applied to characterize $Z_{\chi}(\alpha)$ and 
$X_{\phi}(\epsilon,\theta)$ for arbitrary $\chi$ and $\phi$. Our techniques immediately
extend to give estimates of $\alpha$ and $\epsilon$ for these rotations, but
it may not be immediately clear how to obtain an estimate of $\theta$ in this case. 
The procedure is quite straightforward. First, choose
a positive integer $q$ such that $q\phi=t\pi$ for an odd integer $t$.\footnote{It may happen such a $q$ is impossible
to find (e.g. if $\phi=2\pi/3$). Such cases occur when $\phi=(a/b)\pi$ for $a/b$ a reduced fraction and 
$a$ even. However, letting $c=a/2^s$ be the odd integer part of $a$, calibrating a rotation by $\phi'=(c/b)\pi$ 
is possible, and a rotation by $\phi$ can be obtained by doing $2^s$ rotations by $\phi'$.}
Construct
\begin{align}
U_\phi&=Z_{\pi/2}(0)X_{\phi}(\epsilon,\theta)^qZ_{\pi/2}(0)^2X_{\phi}(\epsilon,\theta)^qZ_{\pi/2}(0).
\end{align}
Using the same procedure as before, we can then estimate $\theta$, assuming $|t\epsilon|$ is not
too large ($|t\epsilon|<0.341$ is sufficient if $\sin^2(\theta)<1/\sqrt{8}$).

%%%%%%%%%%%%%%%%%%%%%%%%%%%%%%%%%%%%%%%%%%%%%%%%%%%%%%%%%%%%%%%%%%%%%%%%%%%%%%%%%%%%%%%
%%%%%%%%%%%%%%%%%%%%%%%%%%%%%%%%%%%%%%%%%%%%%%%%%%%%%%%%%%%%%%%%%%%%%%%%%%%%%%%%%%%%%%%
\section{Bounding and Quantifying Other Errors} \label{sec:errors}

In Section \ref{sec:sequences}, we showed how to construct sequences
such that, if states are prepared perfectly, measurements are performed perfectly,
and the gates are exactly of the form we assume, then one can estimate $\alpha$,
$\epsilon$, and $\theta$ at the  Heisenberg limit.
In this section, we  show that these assumptions can be relaxed, 
and examine their effect on our protocol.

We will completely restrict ourselves to a Hilbert space of
dimension 2. (So we assume all states and operators exist and act
only on this subspace.) Let $Pos(2)$ be the set of positive semidefinite operators on the 
Hilbert space of dimension 2. By $A\succeq B$, we mean $A-B$
is positive semidefinite.   
Consider a general scenario in which we would like to prepare a state
$\rho$, apply a CPTP map $\sop E$ (which might be a sequence of 
gates), and then measure with the POVM 
$\sop W=\{W_1,\dots,W_k\}$. Then the probability of obtaining outcome
$i$ is 
\begin{align}
p_i=\tr\left(W_i\sop E(\rho)\right).
\end{align}
Suppose, however, that instead of preparing the state $\rho$ perfectly,
we prepare the faulty state $\rho'$, apply the faulty CPTP map $\sop E'$
and measure using the faulty POVM $\sop W'=\{W_1',\dots,W_k'\}$. In this
case, the probability of obtaining outcome $i$ is
\begin{align}
p_i'=\tr\left(W_i'\sop E'(\rho')\right).
\end{align}
Since we care about additive errors, which are a difference in probability
between the desired experiment and the implemented experiment, we
would like to bound $|p_i-p_i'|.$

Using the triangle inequality, we have
\begin{align}\label{eq:triangle_errors}
|p_i-p_i'|=&|\tr\left(W_i\sop E(\rho)\right)-\tr\left(W_i\sop E'(\rho)\right)|\nonumber\\
&+|\tr\left(W_i\sop E'(\rho)\right)-\tr\left(W_i'\sop E'(\rho)\right)|\nonumber\\
&+|\tr\left(W_i'\sop E'(\rho)\right)-\tr\left(W_i'\sop E'(\rho')\right)|.
\end{align}
Thus the difference in experimental outcome can be split into 
separate contributions due to gate error, measurement error,  and 
state preparation error.

In particular, measurement error is bounded by
\begin{align}
\delta_{W_i,W_i'}\equiv\max_{\substack{\rho\in Pos(2)\\\tr(\rho)=1}}|\tr\left((W_i-W_i')\rho\right)|,
\end{align}
state preparation error is bounded by
\begin{align}
\delta_{\rho,\rho'}\equiv\max_{\substack{W\in Pos(2)\\W\preceq \I}}
|\tr(W(\rho-\rho')|=\frac{1}{2}\|\rho-\rho'\|_1,
\end{align}
where $\|\cdot\|_1$ is the $l_1$ norm or ``trace distance" (see \cite{Nielsen2004}), and the gate
error is bounded by\footnote{We use the bounded rather than completely bounded (diamond)
norm here because we are restricting our Hilbert space to be of dimension 2.}
\begin{align}
\delta_{\sop E,\sop E'}&\equiv\max_{\substack{W,\rho\in Pos(2) \\ W\preceq \I\\\tr(\rho)=1}}
|\tr\left(W\sop E(\rho)\right)-\tr\left(W\sop E'(\rho)\right)|\nonumber\\
&=\frac{1}{2}\max_{\substack{\rho\in Pos(2)\\\tr(\rho)=1}}\|\sop E(\rho)-\sop E'(\rho)\|_1.
\end{align}

In Section \ref{sec:nonzero_alpha} we examine
the impact of imperfect $Z$ rotations on the gate error contribution
to additive errors. In Section \ref{sec:DE}, we analyze the effect of depolarizing
errors on the gate error contribution
to additive errors. Then in Section \ref{sub:spam}
we look at state preparation and measurement errors and their contributions
to additive errors.

\subsection{Errors in $Z$ Rotations}\label{sec:nonzero_alpha}

In section \ref{sub:theta}, we described a unitary operation
$U$, which involved applying the rotation $Z_{\pi/2}(0)$. 
Suppose that we can't implement $Z_{\pi/2}(0)$, but 
instead can implement $Z_{\pi/2}(\alpha)$.
Let $U'$ be the gate that results when $Z_{\pi/2}(0)$ is
replaced by $Z_{\pi/2}(\alpha)$ in
Eq. \eq{U_seq}.
Let $\sop U$ and $\sop U'$ label the corresponding
CPTP maps.

Using a similar triangle inequality as in Eq. \eq{triangle_errors},
we have that 
\begin{align}
|\tr(M_i&\left(\sop U^k-(\sop U')^k\right)(\rho)|\nonumber\\
&\leq 2k\max_{\substack{\rho\in Pos(2)\\\tr(\rho)=1}} 
\left\|\left(\sop Z_{\pi/2}(0)-\sop Z_{\pi/2}(\alpha)\right)(\rho)\right\|_1\nonumber\\
&\leq 4k\left|\sin\left(\frac{\pi\alpha}{4}\right)\right|,
\end{align}
so a non-zero $\alpha$ contributes at most an amount $k\pi|\alpha|$
to $\delta_{\sop U,\sop U'}.$ For the additive error to be bounded, we require
$|\alpha|=O(1/k)$.

In Section \ref{sub:alpha}, we showed that using $O(N)$ applications
of $Z_{\pi/2}(\alpha)$, we could estimate $\alpha$ with standard deviation
$O(1/N).$ Assuming that the control of $\alpha$ is precise enough
to correct $\alpha$ to within the uncertainty of this estimate,
we can obtain a new $Z$ rotation $Z_{\pi/2}(\alpha')$ with $|\alpha'|=O(1/N).$
This improved rotation can them be used to implement the protocol
for estimating $\theta$ in Section \ref{sub:theta} with standard 
deviation $O(1/N)$. Notice that
both procedures ($\alpha$ and $\theta$ estimation) together use $O(N)$
applications of gates, so in the end, we can obtain an estimate
of $\theta$ the scales at the Heisenberg limit. 

In practice, it is unrealistic to assume that experimentalists have
arbitrarily precise controls, and so at some point, even if 
$\alpha$ is estimated very precisely, it can not be corrected. 
However, in that case, there is no need to obtain such a precise
estimate, for the very reason that it can not be corrected.

We note that the strategy employed in this section is very general,
and can be employed for general
CPTP errors. However, when the errors have certain structure, we
can do better, as in the case of depolarizing errors, which we
analyze in the next section. 

\subsection{Depolarizing Errors}\label{sec:DE}

We now consider the effect of depolarizing noise.  We look at the case
that each applied gate is accompanied by depolarizing noise
$\Lambda_\gamma$, where \begin{align}
\Lambda_\gamma(\rho)=\gamma\rho+(1-\gamma)\I/2. 
\end{align} 

If we have an experiment that involves a sequence of $k$
gates, and the probability of a certain outcome assuming no depolarizing noise
is $1/2+r$ (for $|r|\leq 1/2$), then in the presence of depolarizing noise,
the probability of that outcome will be $1/2+\gamma^k r$. This
gives a gate error of 
\begin{align}
\delta_{\Lambda_\gamma}=|r|(1-\gamma^k)\leq(1-\gamma^k)/2.
 \end{align} 

For depolarizing errors with $\gamma=.99$, which is reasonable for
many quantum systems, one could go to sequences of over 100 operations
 before the depolarizing error would overwhelm the $1/\sqrt{8}$ bound
 of Theorem \ref{main_theorem}. Thus if the depolarizing error is small
 compared to the uncertainty in state preparation and measurement error, Theorem \ref{main_theorem} says
 that our procedure will give more accurate estimates of the parameters
 of interest than could be obtained using standard procedures.

 In fact, in the case of depolarizing errors, because of their simple form,
  one can do better than simply incorporating
 them into additive errors. The procedure of Section \ref{sec:phase_estimation}
 can be re-analyzed in the presence of depolarizing errors, allowing for more
 precise bounds. In
 the interest of conciseness and clarity, we relegate this analysis to later work.

\subsection{State Preparation Errors and Measurement Errors}\label{sub:spam}

State preparation and measurement errors (SPAM) are handled very well in general
by our procedure. This is because SPAM errors contribute a constant 
additive error ($\delta_{M_i,M_i'}+\delta_{\rho,\rho'}$) no matter what gates or operations 
are applied in between state preparation and measurement. As long as these 
additive errors are
not too large, our protocol works. However, there is a challenge in bounding
state preparation errors. Up until this point, we have tried to make as
few assumptions as possible. However, without good gates or good measurements,
it is very difficult to empirically bound the fiducial state preparation error.
Therefore, we do
have to make an assumption: we assume the the experimenter has an upper bound on the trace distance
between their true state preparation $\rho_{\ketbra{0}{0}}$
and the ideal state preparation $\ketbra{0}{0}$. (Once gates have been roughly calibrated,
 better bounds on this distance can then be obtained.) In many experimental set-ups,
the prepared state will be extremely close to the ideal \cite{MK13,JMS+12,Robledo2011}. We have
\begin{align}
\delta_{\ketbra{0}{0},\rho_{\ketbra{0}{0}}}\geq\frac{1}{2}\|\rho_{\ketbra{0}{0}}-\ketbra{0}{0}\|_1.
\end{align}

Now given the initial state $\rho_{\ketbra{0}{0}}$ and our
faulty gates $Z_{\pi/2}(\alpha)$ and $X_{\pi/4}(\epsilon,\theta)$,
we would like to create states that are close in trace distance
to $\ket{+}$ and $\ket{\rightarrow}$.

We will use the states
\begin{align}
\rho_{\ketbra{+}{+}}&=\sop Z_{\pi/2}(\alpha)\sop X_{\pi/4}(\epsilon,\theta)^2(\rho_{\ketbra{0}{0}}),\nonumber\\
\rho_{\ketbra{\rightarrow}{\rightarrow}}&=\sop X_{\pi/4}(\epsilon,\theta)^{6}(\rho_{\ketbra{0}{0}}).
\end{align}

Let $\xi_1=\max\{\epsilon,\theta,\alpha\}$ and $\xi_2=\max\{\epsilon,\theta\}$. Then using
the triangle inequality, one
can calculate that
\begin{align}
\frac{1}{2}\|\rho_{\ketbra{+}{+}}-\ketbra{+}{+}\|_1\leq&
\frac{\xi_1}{2}\left(\frac{\pi^4}{8}\left(12+4\pi+\pi^2\right)\right)^{1/4}\nonumber\\
&+\delta_{\ketbra{0}{0},\rho_{\ketbra{0}{0}}}+O(\xi_1^{5/4}),\nonumber\\
\frac{1}{2}\|\rho_{\ketbra{\rightarrow}{\rightarrow}}-\ketbra{\rightarrow}{\rightarrow}\|_1\leq&
\frac{1}{2}\left(\frac{9\pi^2}{2}\theta^2\epsilon^2\right)^{1/4}\nonumber\\
&+\delta_{\ketbra{0}{0},\rho_{\ketbra{0}{0}}}+O(\xi_2^{5/4}).
\end{align}
In other words, we can create approximate state preparations, which
induce additive errors of the order of the size of the errors in the 
gates used to create them, plus the base additive error from
incorrect preparation of $\ketbra{0}{0}$.

Let $W$ be a measurement operator that is ideally
close to $\ketbra{0}{0}.$
In Appendix \ref{app:bounding} we show how to bound
\begin{align}\label{eq:def_delta_M}
\delta_{\ketbra{0}{0},W}=\max_\rho\left|\tr(W\rho)-
\tr(\ketbra{0}{0}\rho))\right|
\end{align}
given access to the state $\ketbra{0}{0}$ and any other state. 
As usual, if $\rho_{\ketbra{0}{0}}$ is used instead of
$\ketbra{0}{0},$ the difference in outcomes will be bounded
by $\delta_{\ketbra{0}{0},\rho_{\ketbra{0}{0}}}$.

A rotation similar to what is used
in state preparation can be applied to $W$ to obtain
$W_{\ketbra{+}{+}}$ (an operator close to $\ketbra{+}{+}$),
and the additive error for this measurement can be found using
the standard triangle inequality strategy we have employed multiple times.

%%%%%%%%%%%%%%%%%%%%%%%%%%%%%%%%%%%%%%%%%%%%%%%%%%%%%%%%%%%%%%%%%%%%%%%%%%%%%%%%%%%%%%%%%%%%%%%%%%%%%%%%%%%%%%%%%%%%%%%%%%%%%%%%%%%%%%%%%%%%%%%%%
%%%%%%%%%%%%%%%%%%%%%%%%%%%%%%%%%%%%%%%%%%%%%%%%%%%%%%%%%%%%%%%%%%%%%%%%%%%%%%%%%%%%%%%%%%%%%%%%%%%%%%%%%%%%%%%%%%%%%%%%%%%%%%%%%%%%%%%%%%%%%%%%%
%%%%%%%%%%%%%%%%%%%%%%%%%%%%%%%%%%%%%%%%%%%%%%%%%%%%%%%%%%%%%%%%%%%%%%%%%%%%%%%%%%%%%%%%%%%%%%%%%%%%%%%%%%%%%%%%%%%%%%%%%%%%%%%%%%%%%%%%%%%%%%%%%

\section{Non-adaptive Heisenberg limited phase estimation}\label{sec:phase_estimation}

In this section, we will prove Theorem \ref{main_theorem}. First, in
Section \ref{sec:WE}, to set up  the main ideas, we review, and
slightly improve, the proof of Heisenberg scaling without additive
errors  by Higgins et al. \cite{HBB+7}. This sufficiently motivates
our proof in Section \ref{sec:AE}.

\subsection{Heisenberg limit without errors}\label{sec:WE}

Our proof of Theorem \ref{main_theorem} is based on the non-adaptive
phase estimation procedure of Higgins et al. \cite{HBB+7}, which states
\begin{thm}\label{thm:Higgins}\cite{HBB+7}
Say that we can perform two families of experiments, $\ket{0}$-experiments
and $\ket{+}$-experiments, indexed by $k\in\mathbb{Z}$, whose
probabilities of success are, respectively,
\begin{align}
p_0(A,k)&=\frac{1+\cos(kA)}{2},\label{eq:p0}\\
p_+(A,k)&=\frac{1+\sin(kA)}{2}.\label{eq:pp}
\end{align}
Also assume that performing either of the $k^{\text{th}}$ experiments
takes time proportional to $k$. Then, an estimate $\hat{A}$ of $A\in(-\pi,\pi]$
with standard deviation $\sigma(\hat{A})$ can be obtained in time
$T=\mathcal{O}(1/\sigma(\hat{A}))$ using non-adaptive measurements.
\end{thm}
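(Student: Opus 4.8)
The plan is to prove Theorem~\ref{thm:Higgins} by constructing an explicit non-adaptive estimation procedure built from a logarithmic ladder of experiments at geometrically increasing values of $k$, and then bounding the variance of the resulting estimate. Concretely, I would take $k=2^j$ for $j=0,1,\dots,J$, where the largest $j$ sets the finest resolution. At each level $j$, the pair of $\ket{0}$- and $\ket{+}$-experiments lets me estimate $\cos(2^jA)$ and $\sin(2^jA)$ from the empirical success frequencies, and hence estimate the angle $2^jA \bmod 2\pi$. The key structural fact is that knowing $2^jA$ modulo $2\pi$ is equivalent to knowing $A$ modulo $2\pi/2^j$, so moving up one level in $j$ doubles the precision (halves the period) but introduces a factor-of-two ambiguity in which ``bin'' the true $A$ lies in; the estimate from level $j-1$ resolves that ambiguity.

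The procedure I would describe runs as follows. First, fix the number of repetitions $M_j$ allocated to each level $j$, with most repetitions concentrated at coarse levels and fewer at fine levels, chosen so the total time budget $T=\sum_j M_j\, 2^j$ stays $\mathcal{O}(2^J)$ while keeping the per-level angular error controlled. Second, at level $j$ form the two frequency estimates of $p_0(A,2^j)$ and $p_+(A,2^j)$, convert them via $\cos(2^jA)=2p_0-1$ and $\sin(2^jA)=2p_+-1$ into an estimate $\widehat{(2^jA)}$ of the phase at that level using the arctangent. Third, proceed from $j=0$ upward, at each step using the lower-resolution estimate to pick the correct branch (i.e.\ the correct multiple of $2\pi$) for the higher-resolution phase, refining $\hat A$ as we go. This is exactly the classical-phase-estimation / digit-by-digit scheme, made statistical. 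The crucial quantitative point is that each level's measurement noise must be small enough that the branch selection at the next level almost never errs; this is what pins down the required $M_j$.

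The variance analysis is where the Heisenberg scaling emerges and is the part I would develop most carefully. Conditioned on all branch selections being correct, the final error in $\hat A$ is dominated by the statistical error of the \emph{finest} level, which scales as $2^{-J}$ times the angular estimation error at that level, giving $\sigma(\hat A)=\mathcal{O}(2^{-J})=\mathcal{O}(1/T)$ --- the Heisenberg limit --- provided the finest level receives $\Theta(1)$ repetitions. To bound the true (unconditioned) standard deviation, I would separately bound the probability of a branch-selection failure at any level and argue that such failures contribute negligibly to the variance because $A\in(-\pi,\pi]$ is bounded, so even a catastrophic misidentification contributes at most an $\mathcal{O}(1)$ error weighted by an exponentially small failure probability. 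Summing the geometric series of per-level contributions yields the stated $T=\mathcal{O}(1/\sigma(\hat A))$.

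The main obstacle is the rigorous control of the branch-selection (``rounding'') step: I must guarantee that the coarse estimate at level $j-1$ lands in the correct half-period of the level-$j$ phase with overwhelming probability, uniformly over $A$, including the worst case where $A$ sits near a bin boundary. This requires a careful choice of how the $M_j$ grow as $j$ decreases (so coarse levels are measured precisely enough to never mislead) balanced against the total-time constraint, and a Chernoff/Hoeffding-type tail bound on the frequency estimates to make the failure probabilities summable. The subsequent refinement to include additive errors $\delta_0(k),\delta_+(k)$ in Section~\ref{sec:AE} then amounts to checking that the bound $\sup_k\{|\delta_0|,|\delta_+|\}<1/\sqrt{8}$ keeps the perturbed phase estimate within the tolerance that this rounding argument requires.
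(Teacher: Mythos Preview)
Your proposal is correct and follows essentially the same architecture as the paper: a geometric ladder $k_j=2^{j-1}$, an $\mathrm{atan2}$ estimate of $k_jA$ at each level from the two empirical frequencies, coarse-to-fine branch resolution, and the allocation $M_j=\alpha(K-j)+\beta$ so that coarse levels get more shots while the total time stays $\mathcal{O}(2^K)$.

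The one substantive difference worth flagging is your use of a Chernoff/Hoeffding tail bound for the per-level failure probability. That is exactly what the original Higgins et al.\ analysis does, and it suffices to prove Theorem~\ref{thm:Higgins} as stated. The paper, however, replaces Hoeffding by a sharper, essentially exact bound $p_{\text{error}}(k_jA)<(2\pi M_j)^{-1/2}2^{-M_j}$ (Lemma~\ref{lemm:perror_bound}), obtained by identifying the worst-case angle $\varphi=\pi/4$ and then summing the binomial mass directly over diagonal lines in the $(\hat a_0,\hat a_+)$ lattice. This buys two things: a noticeably smaller constant in $T\sigma(\hat A)$, and, more importantly for what follows, a bound whose form survives the introduction of additive errors in Section~\ref{sec:AE} and yields the clean $1/\sqrt{8}$ threshold of Theorem~\ref{main_theorem}. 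With Hoeffding alone you would still get Heisenberg scaling and some additive-error tolerance, but the resulting threshold would be weaker.
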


We reprove Theorem \ref{thm:Higgins} because we use new techniques
that give improved analytic bounds on the scaling of $T\sigma(\hat A)$
compared to \cite{HBB+7}.
These techniques might additionally be of broader use.

For a given $k$, let $\hat{a}_{0}$ ($\hat{a}_{+}$) be the   number
of successful outcomes of the $\ket{0}$- ($\ket{+}$-) experiments
respectively if $M$ samples are taken of each experiment.  Then one can
obtain an estimate $\widehat{kA}$ for $kA$ with
standard deviation $\sigma(\widehat{kA})$:
\begin{align}
\widehat{kA}&= \textrm{atan2}\left[\hat{a}_+-M/2,
\hat{a}_0-M/2\right]\in(-\pi,\pi],\\ \nonumber
\sigma(\widehat{kA})&\propto\frac{1}{\sqrt{M}}.
\end{align}
It is tempting to use this to get an estimate $\hat{A}=\widehat{kA}/k$ for $A$,
apparently with standard deviation
\begin{align}
\sigma(\hat{A})&\propto\frac{1}{k\sqrt{M}}\propto\frac{1}{T},
\end{align}
which gives Heisenberg scaling if $M$ is independent of $k$.
Unfortunately, this estimate is deceptive as it
is only correct up to factors of $\frac{2n\pi}{k}$,
$n\in\mathbb{Z}$, due to the unknown principle range of $kA$.

To determine the correct range of $\widehat{kA}/k$ while still retaining
Heisenberg scaling, Higgins et al. instead  sample distributions with
a range of values of $k$. In particular,  they choose $k$ from $\{k_1,\dots,k_K\}$,
with $k_j=2^{j-1}$. Let $\hat{A}_{j}=\widehat{k_jA}/k_j$ be an estimate
of $A$ obtained from setting $k=k_j$. Then $\hat{A}_1$
 is used to restrict estimates $\hat{A}_j$ for $j>1$ to the range
$(\hat{A}_1-\pi/2,\hat{A}_1+\pi/2].$  Continuing in this way, we
assume  $\hat{A}_{j+1}\in (\hat{A}_{j}-\pi/2^j,\hat{A}_{j}+\pi/2^j]$.
(This restriction differs slightly from Higgins et al.,
in  which they assume  $\hat{A}_{j+1}\in
(\hat{A}_{j}-\pi/3^j,\hat{A}_{j}+\pi/3^j]$.  This small difference
allows us to apply  much stronger bounds to the probability of failure
at any step.)

We immediately see that $\hat{A}_{K}$ will only be in the correct
principle range conditional on all prior estimates $\hat{A}_j$ being
within $\pm\frac{\pi}{2 k_j}$ of the actual value of $A$. In other
words, the probability
\begin{align}
\label{eq:p_error}
p_{\text{error}}(k_jA)\equiv  P\left[k_j(\hat{A}_j-A)\ge \frac{\pi}{2}
\bigvee k_j(\hat{A}_j-A)< -\frac{\pi}{2}\right]
\end{align}
 must be small for all $j$, where the average is taken over
 possible estimates $k_j\hat{A}_j$. (We define $p_{\text{error}}(k_jA)$
 as stated instead of as $P\left[|k_j(\hat{A}_j-A)|\ge \frac{\pi}{2}\right]$
 in order to obtain slightly better bounds.) Any one such error
occurring will lead to an incorrect range of $\hat{A}_{K}$
and thus an incorrect estimate of $\hat{A}$. As the precise value of
$p_{\text{error}}$ has a significant impact in evaluating the scaling
constant of $\sigma(\hat A)=\mathcal{O}(\frac{1}{T})$, a careful bound on
$p_{\text{error}}$ is required. In Lemma \ref{lemm:perror_bound} in
Appendix~\ref{sec:binomial}, we show that if $M_j$ samples
are taken of each of the $k_j\tth$ $\ket{0}$- and $\ket{+}$-experiments,
\begin{align} \label{eq:pmax}
p_{\text{max}}(M_j)\equiv \frac{1}{\sqrt{2\pi M_j}2^{M_j}}
>p_{\text{error}}(k_jA).
\end{align}
This is a stronger bound than what appears in Higgins et al.,
which is derived from Hoeffding's bound. This stronger bound
in turn allows us to obtain a better analytic bound on the variance
of our final estimate.

To calculate the variance of our estimate, we note that if no
errors occur in our principal range estimates for all $k_j<k_h$,
then the maximum error in our estimate is
\begin{align}
\xi(h)=\frac{2\pi}{2^h}.
\end{align}
Furthermore, even if we have no errors in our principal range
estimates, our final estimate can still differ from the
true value by at most
\begin{align}
\overline\xi(K)=\frac{2\pi}{2^{K+1}}.
\end{align}

Thus, we can bound the variance of our estimate $\hat{A}$ of $A$ with
\begin{align}
\label{eq:variance_upper_bound}
\sigma^2(\hat{A})\leq&\big(1-p_{\text{error}}(k_KA)\big)\overline\xi(K)^2\nonumber\\
&+
\sum_{j=1}^K\xi(j)^2p_{\text{error}}(k_jA)\prod_{i=1}^{j-1}(1-p_{\text{error}}(k_iA))\nonumber\\
\leq&\left(1-p_{\text{max}}(M_K)\right)\overline\xi(K)^2+
\sum_{j=1}^K \xi(j)^2p_{\text{max}}(M_j).
\end{align}
Note that the first term is a variance contribution from the event of
\emph{no} errors whereas the second term is the contribution in the
event where errors arise.

We assume that running the $k_j^\tth$ $\ket{0}$- or $\ket{+}$-experiment
takes time $k_j$. Then
the total time required for our estimate is
\begin{align}
T=2\sum_{j=1}^K2^{j-1}M_j.
\end{align}
As in \cite{HBB+7}, setting $\delta_{M_j}(\sigma^2(\hat{A})T^2)=0$, we
find Heisenberg scaling can be attained by setting
\begin{align}\label{eq:Mjsolve}
M_j=\alpha(K- j)+\beta
\end{align}
for $\alpha,\beta\in\mathbb{Z}^+$. The sum in Eq.~\eq{variance_upper_bound} can be performed by making the replacement $p_{\text{max}}(M_j)\le\frac{1}{\sqrt{2\pi \beta}2^{M_j}}$. One finds that $\alpha > 2$ is necessary to prevent the sum from growing faster than $\sim 4^{-K}$, which results in poorer-than-Heisenberg scaling. We obtain
\begin{align}
\label{eq:Heisenberg_constant}
\sigma^{2}(\hat{A})&\le \frac{\pi^2}{4^K}
\left[1+p_{\textrm{max}}(\beta)\left(3+\frac{16}{2^\alpha-4}\right)\right],\nonumber \\ \nonumber
T&<2^{K+1}(\alpha+\beta), \\
\sigma(\hat{A})T&\le 2\pi(\alpha+\beta)
\sqrt{1+p_{\textrm{max}}(\beta)\left(3+\frac{16}{2^\alpha-4}\right)},
\end{align}
which holds for all $K>0$.

Thus Heisenberg scaling can be obtained for any $\alpha > 2,$ $ \beta > 0$. Optimizing Eq.~\eq{Heisenberg_constant} 
over the integers gives $\sigma(\hat{A})T<12.4\pi$ at $\alpha=3$, $\beta=1$. Better bounds of
$\sigma(\hat{A})T<10.7\pi$ can be attained at $\alpha=5/2$, $\beta=1/2$, where
fractional values of $M_j$ means one rounds up to the nearest integer value and performs that many experiments.
This improved bound also uses a more sophisticated analysis of Eq. \eq{variance_upper_bound},
in which we pull out the last $j=K,K-1,...,K-z$ terms from the sum in
Eq.~\eq{variance_upper_bound}, and use
$p_{\text{max}}(M_j)\le\frac{1}{\sqrt{2\pi M_{K-z}}2^{M_j}}, $ for values
of $j < K-z$
to transform the remainder into a geometric sum. These analytic bounds
are significant practical improvements over those in \cite{HBB+7}
where $\sigma(\hat{A})T< 54\pi$ at $\alpha=8\ln{2}$, $\beta=23/2$.

We compare our result to the scaling of various other phase estimation procedures
 (including maximum likelihood and procedures using
entanglement) in Appendix \ref{sec:compare_procedures}.
While the improved analysis of this section gives us better analytic scaling
than was previously known for non-adaptive phase estimation, our main motivation
is to obtain better results in the presence of additive errors.
The new analysis allows us to include much larger additive errors than would
have been possible previously.

\subsection{Including additive errors}\label{sec:AE}

We now consider the case that the success probabilities of
our experiments differ from the ideal probabilities by
 constant factors $\delta_0(k_j)$ and $\delta_+(k_j)$ as
\begin{align}
p_0(A,k_j)=\frac{1+\cos{k_jA}}{2}+\delta_0(k_j)\\
p_+(A,k_j)=\frac{1+\sin{k_jA}}{2}+\delta_+(k_j).
\end{align}
Let
\begin{align}
\delta_j=\max\{|\delta_0(k_j)|, |\delta_+(k_j)|\}.
\end{align}
Suppose we use exactly the same procedure to estimate $A$ as in
the case of no additive errors. Then
in Lemma \ref{lemm:perror_delta} in Appendix \ref{sec:binomial}
we show that now,
\begin{align}
p_{\textrm{max}}(M_j,\delta_j)\equiv&\frac{1}{\sqrt{2\pi}(1-\sqrt{8}\delta_j)}
\frac{\left(1-\frac12(1-\sqrt{8}\delta_j)^2\right)^{M_j}}{\sqrt{M_j}}\nonumber\\
>&p_{\text{error}}(k_jA),
\end{align}
where $p_{\text{error}}(k_jA)$ is defined in Eq. \eq{p_error}.

 Now consider replacing $M_j$ by $F(\delta_j,M_j)\times M_j$, where
 $F(\delta_j,M_j)$ is
 \begin{align}
F(\delta_j,M_j)=\frac{\log\left(\frac12(1-\sqrt{8}\delta_j)^{1/M_j}\right)}
 {\log\left(1-\frac12(1-\sqrt{8}\delta_j)^2\right)}.
 \end{align}
Then as long as $\delta_j<1/\sqrt{8}\approx 0.354$,
we have
\begin{align}
p_{\text{max}}(F(\delta_j,M_j)M_j,\delta_j)
\leq&\frac{1}{\sqrt{2\pi M_j}2^{M_j}}.
\end{align}
This bound is the same as Eq. \eq{pmax}.
This means that by increasing the number of samples of the
$j\tth$ experiment by a factor
$F(\delta_j,M_j)$, we can get the same error bounds as if there
were no additive errors $\delta_0(k_j)$ and $\delta_+(k_j)$.

Suppose there is some smallest $h$ such that $\delta_h\ge1/\sqrt{8}$.
In this case,
no matter how many times we repeat the experiments,
no matter how many samples we take, $p_{\text{error}}(h)$ will not be bounded.
However, we can still use the procedure of the previous
section to obtain an estimate of $\hat{A}$ with variance
proportional to $4^{-(h-1)}$, by using
$F(\delta_j,M_j)M_j$ samples for each $j\leq h-1$, proving the second part of
Theorem \ref{main_theorem}.

Furthermore, if
\begin{align}
\sup_j\delta_j=1/\sqrt{8}, \textrm{ but } \max_j\delta_j\neq1/\sqrt{8},
\end{align}
then we can always increase the number
of samples taken of each experiment in order to counteract the effect of additive errors.
This means that we can obtain arbitrarily accurate estimates.
However, the size of the required $F(\delta_j,M_j)$ blows up,
so we will no longer have Heisenberg scaling.

However, if $\sup_j\delta_j<1/\sqrt{8}$, then for all $j$, we have
\begin{align}
\delta_j<1/\sqrt{8}-e\equiv \delta'
\end{align}
for some  constant $e$. Then if we take $F_jM_j$ samples
of the $j\tth$ iteration, where
 \begin{align}
 F_j=\frac{\log\left(\frac12(1-\sqrt{8}\delta')^{1/M_j}\right)}
 {\log\left(1-\frac12(1-\sqrt{8}\delta')^2\right)},
 \end{align}
 we can attain the correct bounds on $p_{\textrm{error}}.$
If we set $M_j=\alpha(K- j)+\beta$ as before,
$M_j$ is a monotonically decreasing sequence in $j$, so
$F_j$ is a monotonically increasing sequence. Thus, we have $F_j\le
F_K$ for all $j=1,2,\dots,K$.

If for each $M_j$ we replace $M_j$ by $F_jM_j$, we have increased the
total time required by the procedure by at most a constant factor
$F_K$, and obtained at least as good a $p_{\textrm{error}}$ at each step as in
the case without any errors $\delta_0(k_j)$ or $\delta_+(k_j)$. Thus
we can obtain Heisenberg scaling, where $T\sigma_A$ increases by the constant
$F_K$ compared to the case without additive errors $\delta_0(k_j)$ or
$\delta_+(k_j)$. This completes the proof of Theorem \ref{main_theorem}.

\section{Conclusions and Open Problems}

There are many ways to extend and refine the ideas
of this paper. In particular, while the techniques
described here seem to apply broadly for single-qubit
operations, it would be both interesting theoretically and of great
practical use if these procedures could be extended to
multi-qubit operations.

Additionally, there is much room for improvement in terms
of error analysis. In this work, we've suggested treating
depolarizing or amplitude damping noise as contributing
to additive errors. However, this is essentially a worst-case
scenario, in which every process adversarially drives
you away from the desired state by as much as possible.
In reality,  we would expect the repeated applications of
the gate to have a twirling effect, thus mitigating, or at least averaging,
the  effect of noise, as in randomized benchmarking \cite{KLR+08}.
In addition it would be of practical relevance to analyze the
case where $\theta_A$ and $\epsilon_A$ are not fixed, but
shift over time.

Finally, at least on the surface, our procedure has many similarities
to randomized benchmarking:
both procedures are (more or less) robust to SPAM errors, and involve
applying increasingly lengthy sequences of operations. These
similarities draw the question: is there an explicit
connection between phase estimation and randomized benchmarking?

\section{Acknowledgments}
We are grateful for helpful discussions with Colm Ryan, Blake Johnson,
and Marcus da Silva. SK acknowledges support from the Department of Defense.
GHL acknowledges support from the ARO Quantum Algorithms
program.
TJY acknowledges the support of the NDSEG fellowship program.

\bibliographystyle{plain}
\bibliography{Flipflop}

\begin{thebibliography}{10}

\bibitem{Aharonov1997}
D.~Aharonov and M.~Ben-Or.
\newblock Fault-tolerant quantum computation with constant error.
\newblock In {\em Proceedings of the twenty-ninth annual ACM symposium on
  Theory of computing}, pages 176--188. ACM, 1997.

\bibitem{PhysRevA.102.042613}
Federico Belliardo and Vittorio Giovannetti.
\newblock Achieving heisenberg scaling with maximally entangled states: An
  analytic upper bound for the attainable root-mean-square error.
\newblock {\em Phys. Rev. A}, 102:042613, Oct 2020.

\bibitem{Bendersky2009}
A.~Bendersky, F.~Pastawski, and J.~P. Paz.
\newblock Selective and efficient quantum process tomography.
\newblock {\em Phys. Rev. A}, 80:032116, Sep 2009.

\bibitem{Berry2000}
D.~W. Berry and H.~M. Wiseman.
\newblock Optimal states and almost optimal adaptive measurements for quantum
  interferometry.
\newblock {\em Phys. Rev. Lett.}, 85:5098--5101, Dec 2000.

\bibitem{BKG+13}
R.~Blume-Kohout, J.~King Gamble, E.~Nielsen, J.~Mizrahi, J.~D. Sterk, and
  P.~Maunz.
\newblock Robust, self-consistent, closed-form tomography of quantum logic
  gates on a trapped ion qubit.
\newblock {\em arXiv preprint arXiv:1310.4492}, 2013.

\bibitem{Bollinger1996}
J.~J~. Bollinger, Wayne~M. Itano, D.~J. Wineland, and D.~J. Heinzen.
\newblock Optimal frequency measurements with maximally correlated states.
\newblock {\em Phys. Rev. A}, 54:R4649--R4652, Dec 1996.

\bibitem{CGT+09}
J.~M. Chow, J.~M. Gambetta, L.~Tornberg, J.~Koch, L.~S. Bishop, A.~A. Houck,
  B.~R. Johnson, L.~Frunzio, S.~M. Girvin, and R.~J. Schoelkopf.
\newblock Randomized benchmarking and process tomography for gate errors in a
  solid-state qubit.
\newblock {\em Physical review letters}, 102(9):090502, 2009.

\bibitem{CN97}
I.~L. Chuang and M.~A. Nielsen.
\newblock Prescription for experimental determination of the dynamics of a
  quantum black box.
\newblock {\em J. Mod. Opt.}, 44(11-12):2455--2467, 1997.

\bibitem{ET94}
B.~Efron and R.~J. Tibshirani.
\newblock {\em An introduction to the bootstrap}, volume~57.
\newblock CRC press, 1994.

\bibitem{ECM+13}
J.~M. Epstein, A.~W. Cross, E.~Magesan, and J.~M. Gambetta.
\newblock Investigating the limits of randomized benchmarking protocols.
\newblock {\em Physical Review A}, 89(6):062321, 2014.

\bibitem{HBB+7}
B.~L. Higgins, D.~W. Berry, S.~D. Bartlett, M.~W. Mitchell, H.~M. Wiseman, and
  G.~J. Pryde.
\newblock Demonstrating heisenberg-limited unambiguous phase estimation without
  adaptive measurements.
\newblock {\em New Journal of Physics}, 11(7):073023, 2009.

\bibitem{hodges1951}
J.~L. Hodges and E.~L. Lehmann.
\newblock Some applications of the cramér-rao inequality, 1951.

\bibitem{Huelga1997}
S.~F. Huelga, C.~Macchiavello, T.~Pellizzari, A.~K. Ekert, M.~B. Plenio, and
  J.~I. Cirac.
\newblock Improvement of frequency standards with quantum entanglement.
\newblock {\em Physical Review Letters}, 79(20):3865, 1997.

\bibitem{JMS+12}
J.~E. Johnson, C.~Macklin, D.~H. Slichter, R.~Vijay, E.~B. Weingarten, John
  Clarke, and I.~Siddiqi.
\newblock Heralded state preparation in a superconducting qubit.
\newblock {\em Phys. Rev. Lett.}, 109:050506, Aug 2012.

\bibitem{Kacprowicz2010}
M.~Kacprowicz, R.~Demkowicz-Dobrza{\'n}ski, W.~Wasilewski, K.~Banaszek, and
  I.~A. Walmsley.
\newblock Experimental quantum-enhanced estimation of a lossy phase shift.
\newblock {\em Nature Photonics}, 4(6):357--360, 2010.

\bibitem{kimmel14}
S.~Kimmel, M.~P. da~Silva, C.~A. Ryan, B.~R. Johnson, and T.~Ohki.
\newblock Robust extraction of tomographic information via randomized
  benchmarking.
\newblock {\em Physical Review X}, 4(1):011050, 2014.

\bibitem{KLR+08}
E.~Knill, D.~Leibfried, R.~Reichle, J.~Britton, R.~B. Blakestad, J.~D. Jost,
  C.~Langer, R.~Ozeri, S.~Seidelin, and D.~J. Wineland.
\newblock Randomized benchmarking of quantum gates.
\newblock {\em Phys. Rev. A}, 77:012307, Jan 2008.

\bibitem{Kolodynski2010}
J.~Ko{\l}ody{\'n}ski and R.~Demkowicz-Dobrza{\'n}ski.
\newblock Phase estimation without a priori phase knowledge in the presence of
  loss.
\newblock {\em Physical Review A}, 82(5):053804, 2010.

\bibitem{Leibfried2004}
D.~Leibfried, M.~D. Barrett, T.~Schaetz, J.~Britton, J.~Chiaverini, W.~M.
  Itano, J.~D. Jost, C.~Langer, and D.~J. Wineland.
\newblock Toward heisenberg-limited spectroscopy with multiparticle entangled
  states.
\newblock {\em Science}, 304(5676):1476--1478, 2004.

\bibitem{Lucero14}
E.~Lucero.
\newblock Physicists find way to boot up quantum computers 72 times faster than
  previously possible.
\newblock {\em Phys.org}, 2014.

\bibitem{MGE11}
E.~Magesan, J.~M. Gambetta, and J.~Emerson.
\newblock Scalable and robust randomized benchmarking of quantum processes.
\newblock {\em Phys. Rev. Lett.}, 106:180504, May 2011.

\bibitem{MGE12}
E.~Magesan, J.~M. Gambetta, and J.~Emerson.
\newblock Characterizing quantum gates via randomized benchmarking.
\newblock {\em Phys. Rev. A}, 85:042311, Apr 2012.

\bibitem{MGJ+12}
E.~Magesan, J.~M. Gambetta, B.~R. Johnson, C.~A. Ryan, J.~M. Chow, S.~T.
  Merkel, M.~P. da~Silva, G.~A. Keefe, M.~B. Rothwell, T.~A. Ohki, M.~B.
  Ketchen, and M.~Steffen.
\newblock Efficient measurement of quantum gate error by interleaved randomized
  benchmarking.
\newblock {\em Phys. Rev. Lett.}, 109:080505, Aug 2012.

\bibitem{MGS+12}
S.~T. Merkel, J.~M. Gambetta, J.~A. Smolin, S.~Poletto, A.~D. C{\'o}rcoles,
  B.~R. Johnson, C.~A. Ryan, and M.~Steffen.
\newblock Self-consistent quantum process tomography.
\newblock {\em Physical Review A}, 87(6):062119, 2013.

\bibitem{MK13}
C.~Monroe and J.~Kim.
\newblock Scaling the ion trap quantum processor.
\newblock {\em Science}, 339(6124):1164--1169, 2013.

\bibitem{Nielsen2004}
M.~A. Nielsen and I.~L. Chuang.
\newblock {\em {Quantum Computation and Quantum Information}}.
\newblock Cambridge University Press, Cambridge, UK, 1 edition, January 2004.

\bibitem{Preskill1998}
J.~Preskill.
\newblock Reliable quantum computers.
\newblock {\em Proceedings of the Royal Society of London. Series A:
  Mathematical, Physical and Engineering Sciences}, 454(1969):385--410, 1998.

\bibitem{Robledo2011}
L.~Robledo, L.~Childress, H.~Bernien, B.~Hensen, P.~F.~A. Alkemade, and
  R.~Hanson.
\newblock High-fidelity projective read-out of a solid-state spin quantum
  register.
\newblock {\em Nature}, 477(7366):574--578, 2011.

\bibitem{PhysRevA.103.042609}
Antonio~E. Russo, William~M. Kirby, Kenneth~M. Rudinger, Andrew~D. Baczewski,
  and Shelby Kimmel.
\newblock Consistency testing for robust phase estimation.
\newblock {\em Phys. Rev. A}, 103:042609, Apr 2021.

\bibitem{Sanders1995}
B.~C. Sanders and G.~J. Milburn.
\newblock Optimal quantum measurements for phase estimation.
\newblock {\em Phys. Rev. Lett.}, 75:2944--2947, Oct 1995.

\bibitem{stark1}
C.~Stark.
\newblock Global completability with applications to self-consistent quantum
  tomography.
\newblock {\em arXiv preprint arXiv:1209.6499}, 2012.

\bibitem{stark3}
C.~Stark.
\newblock Simultaneous estimation of dimension, states and measurements:
  Computation of representative density matrices and {POVM}s.
\newblock arXiv:1210.1105, 2012.

\bibitem{stark2014}
C.~Stark.
\newblock Self-consistent tomography of the state-measurement gram matrix.
\newblock {\em Physical Review A}, 89(5):052109, 2014.

\bibitem{Vidrighin2014}
M.~D. Vidrighin, G.~Donati, M.~G. Genoni, X.-M. Jin, W.~S. Kolthammer, M.~S.
  Kim, A.~Datta, M.~Barbieri, and I.~A. Walmsley.
\newblock Joint estimation of phase and phase diffusion for quantum metrology.
\newblock {\em Nature communications}, 5, 2014.

\bibitem{WBE14}
J.~J Wallman, M.~Barnhill, and J.~Emerson.
\newblock Characterization of leakage errors via randomized benchmarking.
\newblock {\em arXiv preprint arXiv:1412.4126}, 2014.

\bibitem{WGHF15}
J.~J. Wallman, C.~Granade, R~Harper, and S~Flammia.
\newblock Estimating the coherence of noise.
\newblock {\em arXiv preprint arXiv:1503.07865}, 2015.

\end{thebibliography}

\appendix
%%%%%%%%%%%%%%%%%%%%%%%%%%%%%%%%%%%%%%%%%%%%%%%%%%%%%%%%%%%%%%%%%%%%%%%%%%%%%%%%%%%%%%%%%%%%
%%%%%%%%%%%%%%%%%%%%%%%%%%%%%%%%%%%%%%%%%%%%%%%%%%%%%%%%%%%%%%%%%%%%%%%%%%%%%%%%%%%%%%%%%%%

\section{Bounds on $p_{\textrm{error}}$}\label{sec:binomial}

In this section, we bound the probability of making
an error at any step during our estimation procedure.
An error occurs at the $j^\tth$ iteration if
\begin{align}\label{eq:errorCond}
\left(k_j(\hat{A}_j-A)\ge \frac{\pi}{2}\right)\bigvee \left(k_j(\hat{A}_j-A)< -\frac{\pi}{2}\right).
\end{align}
In the below analysis, we replace $k_j\hat{A}$ with the variable $\hat{\varphi}$
and $k_jA$ with $\varphi$. In Lemma \ref{lemm:perror_bound}, we consider
the case without additive errors $\delta_0(k_j)$ and $\delta_+(k_j)$,
and in Lemma \ref{lemm:perror_delta} we include these errors.

\begin{figure*}
\subfloat[\label{fig:aaplot}]{%
  \includegraphics[height=8cm,width=.49\linewidth]{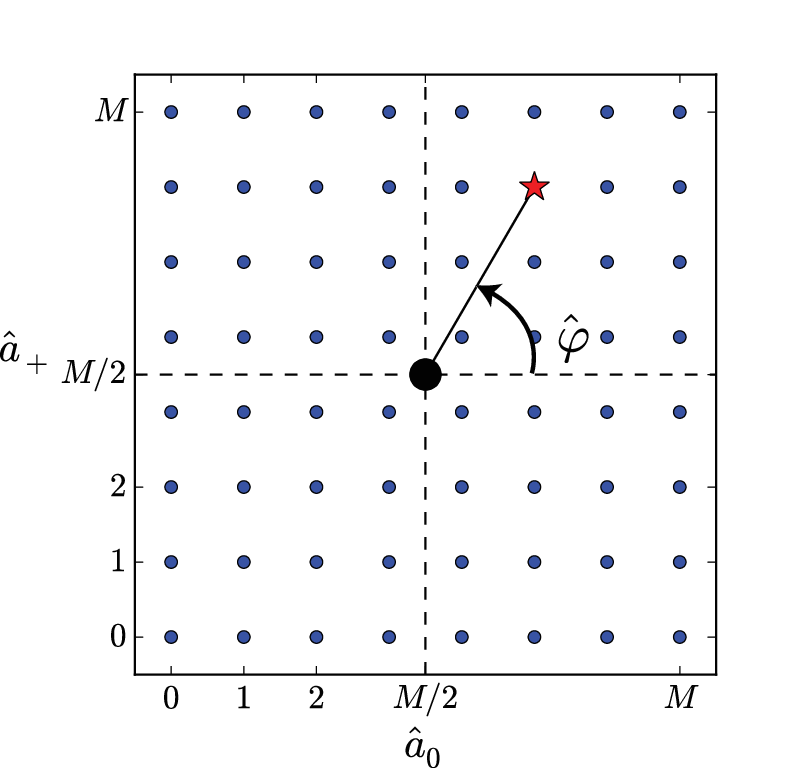}%
}\hfill
\subfloat[\label{fig:aaplot2}]{%
  \includegraphics[height=8cm,width=.49\linewidth]{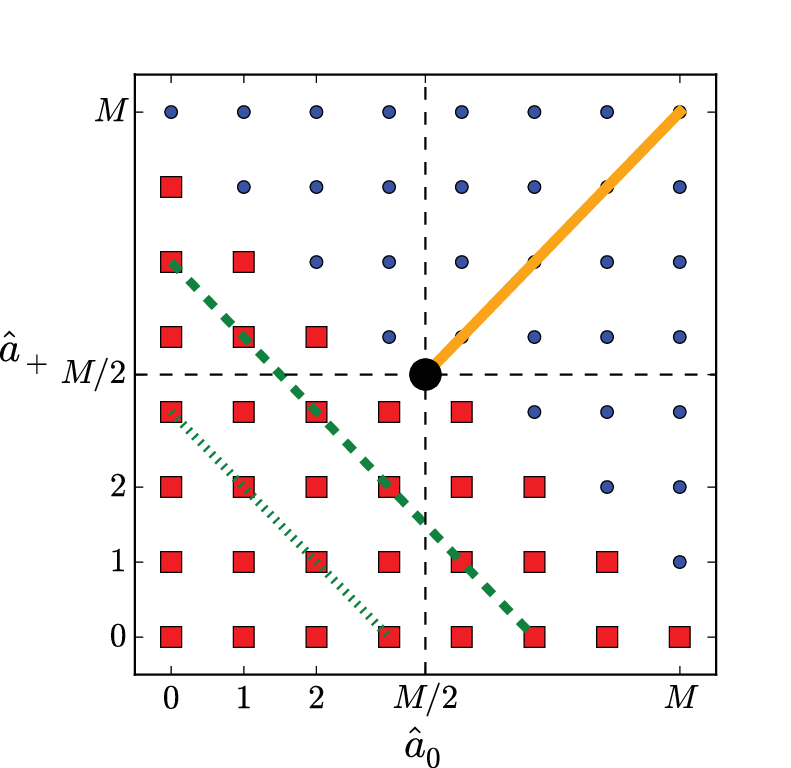}%
}
\caption{In Fig. \ref{fig:aaplot} we show how to calculate
$\hat\varphi$
given $\hat{a}_0$ and $\hat{a}_+$. Note
$\hat{a}_0$ and $\hat{a}_+$ can take values in $\{0,1,\dots,M\}$, so
the blue circular dots represent the possible outcomes $(\hat{a}_0,\hat{a}_+)$.
In Fig. \ref{fig:aaplot2}, we consider the case that
$\varphi$ lies
along the orange line in the upper right quadrant, corresponding to the maximum value of $p_{\textrm{error}}$.
In this case, all of the points
with red square markers correspond to errors. We sum the
probability of being at one of these points by first
calculating the probability of being at one of the points
intersected by the green dashed lines.}
\label{}
\end{figure*}

\begin{lem}\label{lemm:perror_bound}
For $\varphi\in(-\pi,\pi]$ let
\begin{align}
p_0&=\frac{1+\cos(\varphi)}{2},\label{eq:p0theta}\\
p_+&=\frac{1+\sin(\varphi)}{2}\label{eq:pptheta},
\end{align}
and let $\hat{a}_{0}$ (respectively $\hat{a}_{+}$) be drawn from the binomial distribution
$\sop B(M, p_0)$
(resp. $\sop B(M, p_+)$).
Let
\begin{align}
\hat{\varphi}&= \emph{atan2}\left(\frac{2}{M}\hat{a}_{+}-1,\frac{2}{M}\hat{a}_{0}-1\right)
\end{align}
be an estimate for $\varphi$ (and if $a_0=a_+=M/2$, then $\hat{\varphi}$
is chosen uniformly at random from $(-\pi,\pi]$). Then
\begin{align}
p_{\emph{error}}(\varphi)<\frac{1}{\sqrt{2\pi M}2^{M}}
\end{align}
where
\begin{align}\label{eq:perro}
p_{\emph{error}}(\varphi)\equiv P\left[(\hat{\varphi}-\varphi\ge \pi/2)\bigvee(\hat{\varphi}-\varphi<-\pi/2)\right],
\end{align}
and the probability is taken over the possible outcomes $\hat{a}_0$ and $\hat{a}_+.$
\end{lem}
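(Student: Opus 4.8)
The plan is to translate the event defining $p_{\textrm{error}}(\varphi)$ into a statement about the lattice point $(\hat a_0,\hat a_+)$ and then to isolate and analyze a single worst-case angle. Writing $x=\frac{2}{M}\hat a_0-1$ and $y=\frac{2}{M}\hat a_+-1$, the estimate $\hat\varphi$ is the polar angle of $(x,y)$, and since $\cos(\hat\varphi-\varphi)$ has the same sign as the inner product $x\cos\varphi+y\sin\varphi$, the error event $\{\hat\varphi-\varphi\ge\pi/2\}\vee\{\hat\varphi-\varphi<-\pi/2\}$ is --- up to the asymmetric treatment of the two boundary rays --- the half-plane
\begin{align}
\hat a_0\cos\varphi+\hat a_+\sin\varphi\le\tfrac{M}{2}(\cos\varphi+\sin\varphi).
\end{align}
First I would record the first two moments of the radial coordinate $P=x\cos\varphi+y\sin\varphi$: using $p_0(1-p_0)=\tfrac14\sin^2\varphi$ and $p_+(1-p_+)=\tfrac14\cos^2\varphi$, one finds $\mathbf E[P]=1$ and $\mathrm{Var}(P)=\sin^2(2\varphi)/(2M)$. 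Because an error forces $P$ to fall from its mean of $1$ all the way to $0$ while its spread grows with $|\sin2\varphi|$, this makes precise the claim that $p_{\textrm{error}}(\varphi)$ is largest at $\varphi=\pi/4$. After using the reflections $\varphi\mapsto-\varphi$, $\varphi\mapsto\pi-\varphi$ and the coordinate swap $\varphi\mapsto\pi/2-\varphi$ to restrict to $\varphi\in[0,\pi/4]$, I would argue that $p_{\textrm{error}}$ increases monotonically up to the diagonal, so it suffices to bound the diagonal case (the ``orange line'' of the figure).

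At $\varphi=\pi/4$ the analysis collapses nicely. Here $p_0=p_+=p:=\frac{2+\sqrt2}{4}$, so $\hat a_0$ and $\hat a_+$ are i.i.d.\ $\mathcal B(M,p)$ and their sum $S=\hat a_0+\hat a_+\sim\mathcal B(2M,p)$, with the convenient identity $p(1-p)=1/8$. The error half-plane becomes $\hat a_0+\hat a_+\le M$, a union of the anti-diagonals $S=s$; these are the ``green dashed lines,'' and summing the probability over an entire anti-diagonal simply returns the binomial weight $b_s:=P(S=s)=\binom{2M}{s}p^s(1-p)^{2M-s}$. The only subtlety is the boundary diagonal $S=M$ (where $x+y=0$): by the sign analysis above, exactly the points with $\hat a_0<M/2$ are errors, so the symmetry $\hat a_0\leftrightarrow\hat a_+$ shows this diagonal contributes at most $\tfrac12 b_M$. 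Hence
\begin{align}
p_{\textrm{error}}(\pi/4)\le\sum_{s=0}^{M-1}b_s+\tfrac12 b_M.
\end{align}

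To finish, I would bound the tail sum geometrically. For $s\le M$ the ratio $b_{s-1}/b_s=\frac{s}{2M-s+1}\cdot\frac{1-p}{p}$ is at most $\frac{M}{M+1}(3-2\sqrt2)<3-2\sqrt2<1$, so summing a geometric series gives $\sum_{s=0}^{M-1}b_s<b_M\,\frac{M}{M+1}\cdot\frac{3-2\sqrt2}{2\sqrt2-2}$. The algebraic fact that makes the stated constant emerge is
\begin{align}
\frac{3-2\sqrt2}{2\sqrt2-2}=\frac{1}{\sqrt2}-\frac12,
\end{align}
whence $p_{\textrm{error}}(\pi/4)<b_M\big[(\tfrac{1}{\sqrt2}-\tfrac12)\tfrac{M}{M+1}+\tfrac12\big]<\tfrac{1}{\sqrt2}\,b_M$. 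Since $b_M=\binom{2M}{M}(1/8)^M=\binom{2M}{M}4^{-M}2^{-M}$, the standard central-binomial estimate $\binom{2M}{M}4^{-M}\le1/\sqrt{\pi M}$ yields $p_{\textrm{error}}(\pi/4)<\tfrac{1}{\sqrt2}\cdot\frac{2^{-M}}{\sqrt{\pi M}}=\frac{1}{\sqrt{2\pi M}\,2^M}$, as claimed. I expect the main obstacle to be the rigorous justification that $\varphi=\pi/4$ is genuinely the maximizer, rather than some nearby angle at which the tilted boundary happens to capture a more favorable set of lattice points: the moment computation gives the correct heuristic and the symmetries cut the range to $[0,\pi/4]$, but converting ``larger radial variance'' into a true monotonicity statement for the discrete error probability is the delicate point. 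Everything after fixing $\varphi=\pi/4$ is the clean combinatorial estimate above, where the only place demanding care is pinning the boundary-diagonal contribution at exactly $\tfrac12 b_M$, since that is precisely what produces the factor $1/\sqrt2$ rather than a larger constant.
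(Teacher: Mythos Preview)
Your proposal is essentially the same as the paper's proof: both use the inner product $\hat r=x\cos\varphi+y\sin\varphi$ with the moment computation $\mathbf E[\hat r]=1$, $\mathrm{Var}[\hat r]=\sin^2(2\varphi)/(2M)$ to argue that $\varphi=\pi/4$ is worst, then at $\varphi=\pi/4$ sum the anti-diagonals (with the half-contribution on $S=M$), bound by a geometric series, and finish with the central binomial estimate --- your algebra $\frac{3-2\sqrt2}{2\sqrt2-2}=\frac{1}{\sqrt2}-\frac12$ is exactly the paper's $H(M,z)-\tfrac12<\frac{1}{1-z}-\tfrac12=\frac{1}{\sqrt2}$ in different packaging. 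You also correctly identify the one soft spot: the paper does not prove the reduction to $\varphi=\pi/4$ analytically for finite $M$ either, but instead appeals to numerical verification (their Figure~\ref{fig:prob_varphi_plot}).
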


\begin{proof}

While Hoeffding's inequality gives a loose bound on $p_{\textrm{error}}(\varphi)$, we will
use a geometric interpretation to obtain a stronger and asymptotically exact result.
In particular, we can extract an estimate $\hat{\varphi}$ for
$\varphi$ graphically by plotting the value of $\hat{a}_0$ and
$\hat{a}_+$ on orthogonal axes, as shown in  Figure \ref{fig:aaplot}.

Before we take advantage of this geometric interpretation,
we first will show that $p_{\textrm{error}}(\varphi)$ is largest when
$\varphi=\pi/4$, and thus we need only analyze
$p_{\textrm{error}}(\pi/4)$.

We introduce the substitution $\hat{y}=\frac{2}{M}\hat{a}_{+}-1$,
$\hat{x}=\frac{2}{M}\hat{a}_{0}-1$ and
consider the inner product
\begin{align}
\label{eq:r_length}
\hat{r}=(\hat{x},\hat{y}).(\cos\varphi,\sin\varphi).
\end{align}
Note that $p_{\textrm{error}}(\varphi)$ corresponds to the probability
that $\hat{r}$ is less than $0$ (with some small
correction because of one sided error).
In the limit of very large $M,$ $\hat{r}$
becomes a weighted sum of two independent normal distributions,
and is hence a normal distribution itself. As
normal distributions are completely characterized by
their mean and variance, in this limit, $p_{\textrm{error}}(\varphi)$ depends only on the mean
and variance of $\hat{r}$. In  particular, $p_{\textrm{error}}(\varphi)$ will
be largest when the mean of this distribution is smallest and the variance is largest.

Using the well-known properties of binomial
distributions and properties of sums of independent distributions,
we have
\begin{align}
\text{E}\left[\hat r\right]&= 1,&
\text{Var}\left[\hat r\right]&= \frac{1}{2M}\sin^2(2\varphi).
\end{align}
Thus the variance of $\hat{r}$ and hence the probability of error is largest when $\varphi=\pi/4+q\pi/2$
for any integer $q$. When $M$ is not large, we verify (see Figure \ref{fig:prob_varphi_plot})
that $p_{\textrm{error}}(\varphi)$ is indeed largest at $\varphi=\pi/4$.

This leads to a drastic simplification --- we need only bound $p_{\textrm{error}}(\pi/4)$.
($p_{\textrm{error}}(\varphi)$ for $\varphi=\pi/4+q\pi/2$ is the same as $\varphi=\pi/4$ by symmetry.)
 This corresponds to
$\varphi$ lying along the orange line in Figure \ref{fig:aaplot2}. Then
an error occurs when values of $\hat{a}_0$ and $\hat{a}_+$
correspond to the red square markers on Figure \ref{fig:aaplot2}.
% -- we now consider the one-sided error where $-\pi/2<\hat{\varphi}-\varphi \le
%\pi/2$ is valid so only half of the $\hat{y}=-\hat{x}$ line is red.
Thus to bound $p_{\textrm{error}}(\pi/4)$, we calculate the
probability of ending up at any one of the red markers. We do this by
summing over the cases where $(\hat{a}_0+\hat{a}_+)$ is constant and
no greater than $M$, corresponding to the dashed green lines in Figure
\ref{fig:aaplot2}.

\begin{figure}
\includegraphics[width=\columnwidth]{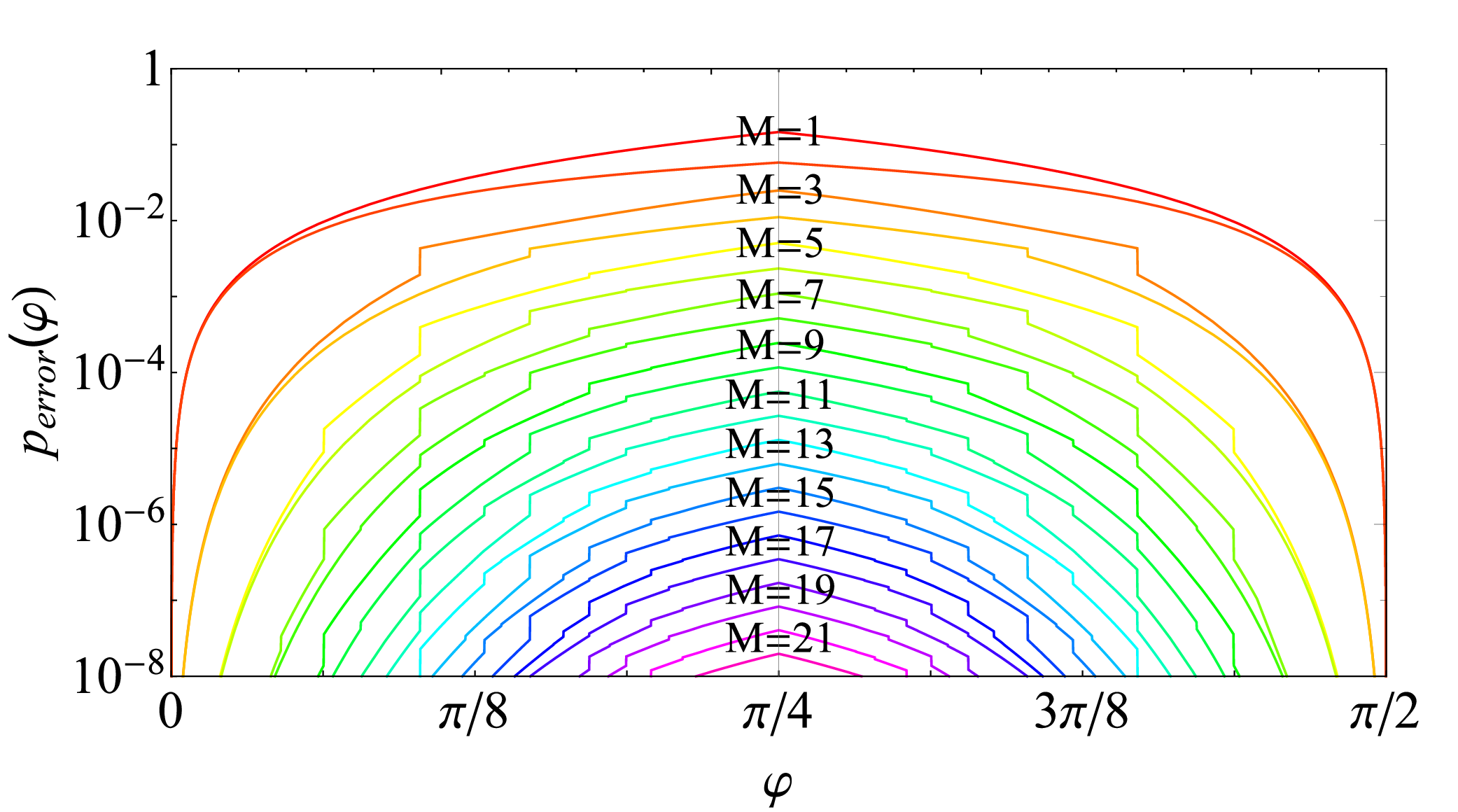}
\caption{\label{fig:prob_varphi_plot}
Exact probability of error as a function of $\varphi$ by enumeration
over all possible outcomes $\hat{a}_0$ and $\hat{a}_+$ that lead to
errors in $\hat{\varphi}$ defined in Eq.~\eq{perro}. Different
lines correspond to the labeled number of repeats $M=1,2,...$ from
the top. Observe that the maximum occurs at $\varphi = \pi/4$ for all M.}
\end{figure}

%We proceed to calculate the worst case value of $p_{\text{error}}$.
For $\varphi=\pi/4,$ we have $p_0=p_+\equiv p=(2+\sqrt2)/4$,
and the probability of finding $\hat a_0=a_0$ and $\hat a_+=a_+$ is
\begin{align}
\label{eq:probability_joint}
P\left[a_0,a_+\right]=&\binom{M}{a_0}\binom{M}{a_+}\left(\frac{p}{1-p}\right)^{a_0+a_+}\left(1-p\right)^{2M}.
\end{align}
The probability of lying on a line $\hat a_0+\hat a_+=b$ is
\begin{align}
P_{\textrm{diag}}(b)&= \sum_{a_0=0}^MP\left[a_0,b-a_0\right]\\\nonumber
&=\binom{2M}{b}\left(\frac{p}{1-p}\right)^b\left(1-p\right)^{2M}.
\end{align}
Summing over the lines of constant $(a_0+a_+)$ up to $M-1$
and including half of the line $(a_0+a_+)=M$, we have,
\begin{align}
p_{\text{error}}(\pi/4)&=\sum_{b=0}^{M}P_{\textrm{diag}}(b)-\frac{1}{2}P_{\textrm{diag}}(M)\\\nonumber
&=(p(1-p))^M\frac{(2M)!}{(M!)^2}\left(H\left(M,\frac{1-p}{p}\right)-\frac{1}{2}\right)\\\nonumber
&=\frac{(2M)!}{8^M(M!)^2}\left(H\left(M,\frac{2-\sqrt2}{2+\sqrt{2}}\right)-\frac{1}{2}\right),
\end{align}
where
\begin{align}
H(M,z)=\sum_{x=0}^M\frac{(M!)^2z^x}{(M-x)!(M+x)!}.
\end{align}
As the $x=0$ term is $1$ and the ratio of successive terms in
$H(M,z)$ is
\begin{align}
\frac{M-x}{1+M+x}z< z,
\end{align} we can bound this sum with a
geometric series:
%$\sum_{x=0}^M z^x
%=\frac{1-z^{M+1}}{1-z}<\frac{1}{1-z}$:
\begin{align}
H(M,z) <\sum_{x=0}^M z^x<\frac{1}{1-z}.
\end{align}
Using Stirling's approximation $n! \sim \sqrt{2\pi n}(n/e)^n$ and
noting that the fractional error of the approximation decreases
monotonically with $n$, we obtain the % \le e \sqrt{ n}(n/e)^n$ leads to the
remarkably simple bound
\begin{align}
\label{eq:error_no_additive}
p_{\text{error}}(\varphi)<\frac{1}{\sqrt{2\pi M}2^M},
\end{align}
%which is optimal up to a constant factor of $\frac{2 \pi^{3/2}}{e^3}\approx\frac{1}{2}$
%and exactly $\frac{1}{\sqrt{2\pi M}2^M}$ in the limit $M\rightarrow\infty$.
which is tight in the limit $M\rightarrow\infty$.

\end{proof}
%%%%%%%%%%%%%%%%%%%%%%%%%%%%%%%%%%%%%%%%%%%%%%%%%%%%%%%%%%%%%%%%%%%%%%%%%%%%%%%%%%%%
%%%%%%%%%%%%%%%%%%%%%%%%%%%%%%%%%%%%%%%%%%%%%%%%%%%%%%%%%%%%%%%%%%%%%%%%%%%%%%%%%%%%
%%%%%%%%%%%%%%%%%%%%%%%%%%%%%%%%%%%%%%%%%%%%%%%%%%%%%%%%%%%%%%%%%%%%%%%%%%%%%%%%%%%%
%%%%%%%%%%%%%%%%%%%%%%%%%%%%%%%%%%%%%%%%%%%%%%%%%%%%%%%%%%%%%%%%%%%%%%%%%%%%%%%%%%%%

We now include additive errors into the analysis:
\begin{lem}\label{lemm:perror_delta}
For $\varphi\in(-\pi,\pi]$ and $\delta_0,$ $\delta_+$ such that $|\delta_0|,|\delta_+|\leq \delta<1/\sqrt{8}$, let
\begin{align}
p_0&=\frac{1+\cos(\varphi)}{2}+\delta_0,\\
p_+&=\frac{1+\sin(\varphi)}{2}+\delta_+,
\end{align}
and let $\hat{a}_{0}$ $(\hat{a}_{+})$ be drawn from the binomial distribution
$\sop B(M, p_0)$
$(\sop B(M, p_+))$.
Let
\begin{align}
\hat{\varphi}&= \emph{atan2}\left(\frac{2}{M}\hat{a}_{+}-1,\frac{2}{M}\hat{a}_{0}-1\right)
\end{align}
be an estimate for $\varphi$ (and if $a_0=a_+=M/2$, then $\hat{\varphi}$
is chosen uniformly at random from $(-\pi,\pi]$). Then
\begin{align}
p_{\emph{error}}(\varphi,\delta_+,\delta_-)<\frac{e}{2\pi}\frac{1}{1-\sqrt{8}\delta}\frac{\left(1-\frac12\left(1-\sqrt8\delta\right)^2\right)^M}{\sqrt{M}}
\end{align}
where
\begin{align}\label{eq:perror2}
p_{\emph{error}}(\varphi,\delta_+,\delta_-)\equiv P\left[(\hat{\varphi}-\varphi\ge \pi/2)\bigvee(\hat{\varphi}-\varphi<-\pi/2)\right],
\end{align}
and the probability is taken over the possible outcomes $\hat{a}_0$ and $\hat{a}_+.$
\end{lem}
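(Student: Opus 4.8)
The plan is to mirror the proof of \lemm{perror_bound} as closely as possible, since the additive errors only shift the means of the underlying binomials while leaving the combinatorial skeleton intact. As before, I would set $\hat{x}=\frac{2}{M}\hat{a}_0-1$ and $\hat{y}=\frac{2}{M}\hat{a}_+-1$ and study the projection $\hat{r}=(\hat x,\hat y)\cdot(\cos\varphi,\sin\varphi)$, so that an error occurs precisely when $\hat r<0$ (up to the one-sided boundary convention). The only change is in the means: the additive errors give $\mathrm{E}[\hat r]=1+2\delta_0\cos\varphi+2\delta_+\sin\varphi$, and since $|\delta_0|,|\delta_+|\le\delta$ and $|\cos\varphi|+|\sin\varphi|\le\sqrt2$, this is bounded below by $s:=1-\sqrt8\,\delta$, which the hypothesis $\delta<1/\sqrt8$ keeps positive. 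The first step is thus to argue that $p_{\text{error}}$ is largest in the configuration making this mean smallest and the variance largest.

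To handle the additive errors rigorously, I would invoke stochastic monotonicity: a binomial is stochastically increasing in its success probability, and $\hat r$ is monotone in each $\hat a_i$ with sign $\mathrm{sgn}(\cos\varphi)$, $\mathrm{sgn}(\sin\varphi)$, so pushing each $\delta_i$ to the extreme that lowers $\hat r$ can only increase $P[\hat r<0]$. For the maximization over $\varphi$ I would reuse the mean/variance (large-$M$ Gaussian) argument of \lemm{perror_bound}, supplemented by the explicit small-$M$ check, to land at $\varphi=\pi/4$ (mod $\pi/2$) with $\delta_0=\delta_+=-\delta$. I expect this to be the main obstacle: unlike the error-free case, where $\mathrm{E}[\hat r]$ was pinned at $1$ and only the variance varied, here both the mean and the variance depend on $\varphi$ and the $\delta_i$. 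The saving grace is that the two effects do not trade off — one checks that the mean gap $1-2\delta(\cos\varphi+\sin\varphi)$ is minimized and the variance maximized at the same point $\varphi=\pi/4$ — so once both are verified the worst case is unambiguous.

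At that worst-case point $p_0=p_+=p:=\frac{2+\sqrt2}{4}-\delta$, so $S:=\hat a_0+\hat a_+\sim\mathcal{B}(2M,p)$ and an error means $S<M$, counting the line $S=M$ with weight $\tfrac12$. Crucially, $p>\tfrac12$ is equivalent to $\delta<1/\sqrt8$, so the threshold $M$ lies strictly below the mean $2Mp$ and we are genuinely bounding a lower tail. Factoring out the central term exactly as in \lemm{perror_bound} gives the identical closed form
\begin{align}
p_{\text{error}}(\pi/4)=\binom{2M}{M}\big(p(1-p)\big)^M\left(H\!\left(M,\tfrac{1-p}{p}\right)-\tfrac12\right),\nonumber
\end{align}
with $H$ as defined there, and the two useful identities $2p-1=s/\sqrt2$ and $4p(1-p)=1-s^2/2$.

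Finally I would bound the two factors. The geometric-series bound $H(M,z)<\frac{1}{1-z}$, valid since $z=\frac{1-p}{p}<1$, gives $H-\tfrac12<\frac{1}{2(2p-1)}=\frac{1}{\sqrt2\,s}$, while the crude Stirling estimates $(2M)!\le e\sqrt{2M}\,(2M/e)^{2M}$ and $M!\ge\sqrt{2\pi M}\,(M/e)^M$ give $\binom{2M}{M}\le\frac{e}{\pi}\frac{4^M}{\sqrt{2M}}$ — this is precisely where the constant $\frac{e}{2\pi}$ (slightly looser than the $\frac{1}{\sqrt{2\pi}}$ of the error-free lemma) comes from. Multiplying these with $\big(p(1-p)\big)^M=4^{-M}(1-s^2/2)^M$ collapses everything to
\begin{align}
p_{\text{error}}<\frac{e}{2\pi}\,\frac{1}{1-\sqrt8\,\delta}\,\frac{\big(1-\tfrac12(1-\sqrt8\,\delta)^2\big)^M}{\sqrt M},\nonumber
\end{align}
which is the claim.
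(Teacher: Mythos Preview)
Your proposal is correct and follows essentially the same route as the paper: reduce to the worst case $\varphi=\pi/4$, $\delta_0=\delta_+=-\delta$ via the projection $\hat r$ and its mean/variance, then recognise $\hat a_0+\hat a_+\sim\mathcal{B}(2M,p)$ with $p=(2+\sqrt2)/4-\delta$ and reuse the $H(M,z)$ computation from \lemm{perror_bound}. Two minor remarks: your stochastic-monotonicity step for the $\delta_i$ is a clean addition the paper leaves implicit; and your cruder Stirling bound yields exactly the stated constant $e/(2\pi)$, whereas the paper's sharper Stirling (as in \lemm{perror_bound}) actually gives the slightly better $1/\sqrt{2\pi}$.
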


\begin{proof}
This proof will be similar to the proof of Lemma \ref{lemm:perror_bound},
so we will omit some of the details if they parallel the previous result.
 As done in
Lemma \ref{lemm:perror_bound}, we introduce the substitution
$\hat{x}=\frac{2}{M}\hat{a}_0-1$, $\hat{y}=\frac{2}{M}\hat{a}_+-1$
and consider
\begin{align}
\hat{r}=(\hat{x},\hat{y}).(\cos\varphi,\sin\varphi).
\end{align}
We find in this case that in the limit of large $M,$
\begin{align}
\text{E}\left[\hat r\right]&= 1+2(\delta_0 \cos{\varphi}+\delta_+ \sin{\varphi}),\\ \nonumber
\text{Var}\left[\hat r\right]&=\frac{1-\left(\cos{\varphi}(2\delta_0+\cos{\varphi})\right)^2-
\left(\sin{\varphi}(2\delta_++\sin{\varphi})\right)^2}{M}.
\end{align}
As explained in the proof of Lemma \ref{lemm:perror_bound},
 $p_{\textrm{error}}$ is maximized when we simultaneously minimize $\hat{r}$'s expectation and maximize its
variance. Using $|\delta_0|,|\delta_+|\le \delta$, we have
\begin{align}
\text{E}\left[\hat r\right]&\ge 1+\sqrt{8}\delta\cos{(\varphi-s)},\\\nonumber
\text{Var}\left[\hat r\right]&\le\frac{1}{M}\left(1-\cos{\varphi}^2
\min{\left[1,(2\delta+\sqrt{2}\cos{s}\cos{\varphi})^2\right]}\right.\\\nonumber
&\quad\left.-\sin{\varphi}^2\min{\left[1,(2\delta+\sqrt{2}\sin{s}\sin{\varphi})^2\right]}\right).
\end{align}
where $s=\pi\left(\frac{1}{4}+\frac{j}{2}\right),\,j=0,1,2,3$ is used
to represent the signs of $\delta_0$ and $\delta_+$.
Thus, the worst-case bounds
\begin{align}
\text{E}\left[\hat r\right]&\ge 1-\sqrt{8}\delta,\no
\text{Var}\left[\hat r\right]&\le \frac{1}{M}\left(1-(\frac{1}{\sqrt{2}}-2\delta)^2\right),
\end{align}
are obtained when $\delta_0=\delta_+=-\delta$ (corresponding to $s=\pi+\pi/4$) and
$\varphi=\pi/4$, leading to $p_0=p_+\equiv p=(2+\sqrt{2})/4-\delta$. We thus
have $p_{\textrm{error}}(\varphi,\delta_+,\delta_-)\leq p_{\textrm{error}}(\pi/4,-\delta,-\delta)$.

The bound on $p_{\textrm{error}}(\pi/4,-\delta,-\delta)$ is then obtained by a calculation identical
to the proof of Lemma~\ref{lemm:perror_bound} from Eq.~\eq{probability_joint}
onwards, except with $p=(2+\sqrt{2})/4-\delta$. We obtain
\begin{align}
p_{\textrm{error}}(\pi/4,-\delta,-\delta)&=\frac{(2M)!}{4^M(M!)^2}\left(1-\frac12\left(1-\sqrt8\delta\right)^2\right)^M\no
&\quad\times \left(H\left(M,\frac{2-\sqrt{2}+4\delta}{2+\sqrt{2}-4\delta}\right)-\frac{1}{2}\right)\no
&< \frac{1}{\sqrt{2\pi}}\frac{1}{1-\sqrt{8}\delta}\frac{\left(1-\frac12\left(1-\sqrt8\delta\right)^2\right)^M}{\sqrt{M}}.
\end{align}
Observe that Eq.~\eq{error_no_additive} is recovered in
the absence of additive errors (i.e. when $\delta=0$).
\end{proof}

\section{Scaling of Phase Estimation Procedures}\label{sec:compare_procedures}

In Section \ref{sec:WE}, we gave an analytic bound on
the scaling of our Heisenberg-limited phase estimation
technique. Optimizing Eq.~\eq{variance_upper_bound} gave
$\sigma(\hat{A})T< 10.7\pi$.

This upper bound on the Heisenberg scaling constant should of course
be compared to lower bounds. A number of lower bounds are commonly cited in the literature, depending on the specification
of allowed resources. The best possible bound is $\sigma(\hat{A})T
\ge 1$~\cite{Bollinger1996}, often used in the atomic clocks community~\cite{Leibfried2004}. The resources required are similar to those used for our scheme, except that there is
no iteration from $j=1,...,K-1$, so only the largest $K$ experiment is
used. However, achieving this bound is only possible when the principle range
of $A$ is known -- a reasonable assumption when tracking well-characterized
frequencies, but not when $A$ is completely unknown.

The next largest bound on the scaling is $\sigma(\hat{A})T \ge \pi$~\cite{Berry2000},
which is achievable using quantum phase estimation. Unlike the above case, $A$ can be completely unknown initially.
However, this scheme requires the resource of entanglement between different
experimental runs with multi-qubit gates, or non-local measurements~\cite{Sanders1995}.
Such requirements are technically demanding, which motivates
entanglement-free schemes.

Reasonable lower bounds for the entanglement-free scenario can be
derived, but proving whether they are achievable remains an open
question. For each experiment at some $k_j$, (with $k_j$ as in Section \ref{sec:phase_estimation})
, the amount of information
we obtain about $A$ can be quantified by the Fisher information
\begin{align}
I(A,k_j)=\mathbf{E}\left[\left(\frac{d \log{p(A,k_j)}}{d A}\right)^2\right]=k_j^2,
\end{align}
where expectation over success and failure is taken. As the $2M_j$
repeats of the experiment are independent, the total information
obtained over all values of $k$ is $I=\sum_{j=1}^{K} I(A,k_j)2M_j$. In
the large $K$ limit, $I=\frac{2}{9}4^K(3\beta+\alpha)$. Using the
Cramer-Rao inequality~\cite{hodges1951} then bounds the variance of $\hat{A}$ obtained
via any unbiased estimator, such as maximum likelihood estimation, by
$\sigma^{2}(\hat A)\ge F^{-1}$. Thus we obtain
\begin{align}
\sigma(\hat{A})T \ge (\alpha+\beta)\sqrt{\frac{18}{\alpha+3\beta}}.
\end{align}
At the settings of $\alpha=5/2,$ $\beta=1/2$, we obtain
$\sigma(\hat{A})T\ge2.0\pi$, which is about five times smaller
than that obtained through Eq. \eq{variance_upper_bound}.

While maximum likelihood is a reasonable approach for
standard phase estimation, once additive errors are included,
we no longer have an unbiased estimator, so in this setting is unfair to compare
our bound to that of the Cramer-Rao bound.
Once additive errors are included, we do not have an appropriate
lower bound on the scaling.

%%%%%%%%%%%%%%%%%%%%%%%%%%%%%%%%%%%%%%%%%%%%%%%%%%%%%%%%%%%%%%%%%%%%%%%%%%%%%%%%%%%%
%%%%%%%%%%%%%%%%%%%%%%%%%%%%%%%%%%%%%%%%%%%%%%%%%%%%%%%%%%%%%%%%%%%%%%%%%%%%%%%%%%%%
%%%%%%%%%%%%%%%%%%%%%%%%%%%%%%%%%%%%%%%%%%%%%%%%%%%%%%%%%%%%%%%%%%%%%%%%%%%%%%%%%%%%
%%%%%%%%%%%%%%%%%%%%%%%%%%%%%%%%%%%%%%%%%%%%%%%%%%%%%%%%%%%%%%%%%%%%%%%%%%%%%%%%%%%%
\section{Initial Bounding Techniques}\label{app:bounding}
Our single-qubit calibration procedure works only when the errors are below a certain
initial size. Here we show how the initial size of these errors can be bounded by
conducting the appropriate experiments.

\subsection{Bounding $\epsilon$ and $\theta$}\label{sec:bound}
In Section \ref{sub:epsilon}, we showed that we can estimate
$\epsilon$ and $\theta$ at the Heisenberg limit as long as
$\epsilon^2$ and $\theta^2$ are not too large. Here, we give a
procedure to bound the initial size of $\epsilon$ and $\theta.$

Let
\begin{align}
q_0=\left|\bra{0}X_{\pi/4}(\epsilon,\theta)^4\ket{0}\right|^2.
\end{align}
By direct calculation, we have
\begin{align}\label{eq:theta_vs_epsilon}
q_0=\sin^2(\theta)+\cos^2(\theta)\sin^2\left(\frac{t\pi\epsilon}{2}\right).
\end{align}

The maximum value $\theta$ can attain is found by setting $\epsilon=0$.
This gives us
\begin{align}
|\theta|\leq \arcsin\sqrt{q_0}.
\end{align}
Likewise, the maximum value $\epsilon$ can attain is found by setting $\theta=0$.
This gives us
\begin{align}
|\epsilon|\leq \frac{2\arcsin\sqrt{q_0}}{t\pi}.
\end{align}

Now we just need to bound $q_0$. Using Hoeffding's bound,
 if we make $V$ observations
of $q_0$, we can obtain an estimate
$\hat q_0$ for $q_0$ such that
\begin{align}
 P(q_0<\hat q_0+\mu)>&1-\exp[-2V\mu^2].
\end{align}
Thus we have
\begin{align}
|\theta|\leq& \arcsin\sqrt{\hat q_0+\mu},\no
|\epsilon|\leq& \frac{2\arcsin\sqrt{\hat q_0+\mu}}{t\pi}
\end{align}
with probability $1-\exp[-2V\mu^2].$

\subsection{Bounding Measurement Error}\label{sec:meas_error}
In this section, we show how to bound $\delta_{\ketbra{0}{0},W}$ of Eq. \eq{def_delta_M},
given access to $W,$ the faulty measurement operator, and the ability
to prepare the states $\ketbra{0}{0}$ and $\varrho$ where $\varrho$
is ideally close to $\ketbra{1}{1}.$

Consider the following measurements:
\begin{align}
G_0=&\tr(W\ketbra{0}{0})\no
G_1=&\tr(W\varrho).
\end{align}
 Suppose $V$ observations are made of each variable $G_0$ and
 $G_1$, producing estimates $\widehat{G}_0$ and $\widehat{G}_1$
 of the respective variables. Then using Hoeffding's Bound,
 we have that
 \begin{align}\label{eq:Fbounds}
 P(G_0>\widehat{G}_0-\mu)>&1-\exp[-2V\mu^2]\no
 P(G_1<\widehat{G}_1+\mu)>&1-\exp[-2V\mu^2].
 \end{align}
 We will show that if
 \begin{align}\label{eq:assumptions}
 G_0>&\widehat{G}_0-\mu\equiv\widehat{G}_0^- \text{ and},\no
 G_1<&\widehat{G}_1+\mu\equiv\widehat{G}_1^+,
 \end{align}
 then
 \begin{align}
 \delta_W
 \leq& \Delta_1+\sqrt{\Delta_1^2+\Delta_2^2/2},
 \end{align}
where
  \begin{align}
\label{eq:Delta1}  \Delta_1=&\frac{(\widehat{G}_0^-)^2
-(\widehat{G}_1^+)^2-3\widehat{G}_0^--2\widehat{G}_0^-\widehat{G}_1^+-\widehat{G}_1^++2}
{2(\widehat{G}_0^--\widehat{G}_1^+)},\nonumber\\
  \Delta_2=&2(1-\widehat{G}_0^-).
  \end{align}
  By the union bound,
we have
\begin{align}
P\left(\delta_W\leq  \Delta_1+\sqrt{\Delta_1^2+\Delta_2^2/2}\right)
\geq   1-2\exp[-2V\mu^2].
\end{align}
  One can verify that if $\widehat{G}_0\approx 1$ and
  $\widehat{G}_1\approx 0$, and $\mu\ll 1$, then $\Delta_1$ and
  $\Delta_2$ are small and hence $\delta_W$ is small.

Since Pauli operators are an orthonormal basis for Hermitian
operators, we can write
\begin{align}\label{eq:pauli_basis}
W&=\sum_{i=0}^3 m_i\sop P_i\nonumber\\
\varrho&=\frac{1}{2}\left(\sop P_0+\sum_{i=1}^3r_i\sop P_i\right),
\end{align}
where $\sop P_0=\I,$ $\sop P_1=\sop P_x$, $\sop P_2=\sop P_y$,
and $\sop P_3=\sop P_z$. Additionally $W$ and $\varrho$ must be
positive semidefinite, and $0\leq\tr(W\rho)\leq 1$ for all $\rho.$

Using Eq. \eq{assumptions} and Eq. \eq{pauli_basis}, we have
\begin{align}
1\geq m_0+ m_3&>\widehat{G}_0^-,\label{eq:bound0}\\
0\leq\sum_{i=0}^3 m_ir_i&<\widehat{G}_1^+.\label{eq:bound1}
\end{align}

We will use Eq. \eq{bound0} to upper bound the size of
$ m_1$ and $ m_2.$ The eigenvalues of $W$ must lie in
the range $[0,1].$ Explicitly evaluating the eigenvalues
of $W$, and requiring that they are in this range gives
\begin{align}\label{eq:eig_bound}
0\leq m_1^2+m_2^2\leq (1-m_0)^2-m_3^2.
\end{align}
Using Eq. \eq{bound0}, we have
\begin{align}
1-m_0\geq m_3>\widehat{G}_0^--m_0.
\end{align}
Thus we can write
\begin{align}\label{eq:m_3}
m_3=f-m_0
\end{align}
for some $\widehat{G}_0^-<f\leq 1$.
Plugging Eq. \eq{m_3} into Eq. \eq{eig_bound}
and taking the derivative with respect to $m_0$,
we find
\begin{align}
0\leq m_1^2+m_2^2\leq (1-f)^2.
\end{align}
Since $\widehat{G}_0^-<f\leq 1$, we finally have
\begin{align}
0\leq m_1^2+m_2^2\leq (1-\widehat{G}_0^-)^2,
\end{align}
so
\begin{align}\label{eq:m_1bound}
|m_1|,|m_2|\leq  1-\widehat{G}_0^-.
\end{align}

Using Eq. \eq{bound0}, and that $1-r_3>0$, we have
\begin{align}
 m_3&>\frac{\widehat{G}_0^-- m_0- m_3r_3}{1-r_3}\nonumber\\
&> \frac{\widehat{G}_0^--\widehat{G}_1^+-(1-\widehat{G}_0^-)
(|r_1|+|r_2|)}{1-r_3},
\label{eq:dboundu}
\end{align}
where in the second line, we have used Eq. \eq{m_1bound}

Assuming $\widehat{G}_0^-\approx 1$ and $\widehat{G}_1^+\approx 0$, the
numerator of Eq. \eq{dboundu} will be positive. Using the positive semidefinte
constraint on $\varrho$, we have $r_3>-\sqrt{1-r_1^2-r_2^2}$,
so
\begin{align}
 m_3> \frac{\widehat{G}_0^--\widehat{G}_1^+-(1-\widehat{G}_0^-)
(|r_1|+|r_2|)}{1+\sqrt{1-r_1^2-r_2^2}}.\nonumber
\end{align}
We always want to choose $r_1=r_2.$
If $r_1\neq r_2$, we can replace $r_1$ and
$r_2$ by their average, thereby preserving the numerator
while increasing the denominator. Thus
\begin{align}\label{eq:dboundc}
 m_3> \frac{\widehat{G}_0^--\widehat{G}_1^+
-2(1-\widehat{G}_0^-)|r_1|}
{1+\sqrt{1-2r_1^2}}.
\end{align}

We now minimize the right hand side of Eq. \eq{dboundc}
with respect to $r_1$, (assuming we are in a regime where
$\widehat{G}_0^-\approx 1$, and $\widehat{G}_1^+\approx 0$,)
giving
\begin{align}
m_3>\frac{2-(2-\widehat{G}_0^-)^2-\widehat{G}_1^+(2\widehat{G}_0^--\widehat{G}_1^+)}
{2(\widehat{G}_0^--\widehat{G}_1^+)}.
\end{align}
At this point, we can bound the error that results from
using $W$ instead of the ideal $\ketbra{0}{0}.$
For an arbitrary state $\omega$ such that
\begin{align}
\omega&=\frac{1}{2}\left(\sop P_0+\sum_{i=1}^3w_i\sop P_i\right),
\end{align}
we have
\begin{align}\label{eq:Mrhodiff}
|\tr(W\omega)-\tr(\ketbra{0}{0}\omega)|
 \leq\Delta_1(1+w_3)
  + \Delta_2\sqrt{\frac{1-w_3^2}{2}}
  \end{align}
with $\Delta_1$ and $\Delta_2$ given by Eq. \eq{Delta1},
and we have used the trick of replacing $w_1$ and $w_2$ by their average.
Maximizing \eq{Mrhodiff}
   with respect to $w_3$
  we have
  \begin{align}
\left|\tr(W\omega)-\tr(\ketbra{0}{0}\omega)\right|\leq
 \Delta_1+\sqrt{\Delta_1^2+\Delta_2^2/2}.
    \end{align}
    as claimed.

\end{document}